\tikzstyle{vertex}=[circle, draw, inner sep=0pt, minimum size=6pt]
\newtheorem{theorem}{Theorem}
\newtheorem{definition}[theorem]{Definition}
\newtheorem{remark}[theorem]{Remark}
\newtheorem{example}{Example}
\newcommand{\llbr}{[\![}
\newcommand{\rrbr}{]\!]}
\newcommand{\C}{\mathbb C}
\newcommand{\Z}{\mathbb Z}
\newcommand{\F}{\mathbb F}
\newcommand\numberthis{\addtocounter{equation}{1}\tag{\theequation}}
\newcommand{\ket}[1]{|{#1}\rangle}
\newcommand{\bra}[1]{\langle{#1}|}
\newbox\abstract@box
\renewenvironment{abstract}
{\global\setbox\abstract@box=\vbox\bgroup
	\hsize=\textwidth\linewidth=\textwidth
	\small
	\begin{center}%
		{\bfseries \abstractname\vspace{-.5em}\vspace{\z@}}%
	\end{center}%
	\quotation}
{\endquotation\egroup}
\def\expandafter\@maketitle\expandafter{\@maketitle
	\ifvoid\abstract@box\else\unvbox\abstract@box\if@twocolumn\vskip1.5em\fi\fi}
\newif\ifnotes
\begin{document}
	\title{Climbing the Diagonal Clifford Hierarchy}
	\author[1]{Jingzhen Hu\thanks{The first two authors contributed equally to this work.}$^,$}
	\author[1]{Qingzhong Liang$^{\ast,}$}
	\author[1,2,3]{Robert Calderbank}
	\affil[1]{Department of Mathematics, Duke University}
	\affil[2]{Department of Electrical and Computer Engineering, Duke University}
	\affil[3]{Department of Computer Science, Duke University, Durham, NC 27708, USA}
	\date{E-mail: \{jingzhen.hu, qingzhong.liang, robert.calderbank\}@duke.edu}
	\begin{abstract}
		Magic state distillation and the Shor factoring algorithm make essential use of logical diagonal gates. We introduce a method of synthesizing CSS codes that realize a target logical diagonal gate at some level $l$ in the Clifford hierarchy. The method combines three basic operations: concatenation, removal of $Z$-stabilizers, and addition of $X$-stabilizers. It explicitly tracks the logical gate induced by a diagonal physical gate that preserves a CSS code. The first step is concatenation, where the input is a CSS code and a physical diagonal gate at level $l$ inducing a logical diagonal gate at the same level. The output is a new code for which a physical diagonal gate at level $l+1$ induces the original logical gate. The next step is judicious removal of $Z$-stabilizers to increase the level of the induced logical operator. We identify three ways of climbing the logical Clifford hierarchy from level $l$ to level $l+1$, each built on a recursive relation on the Pauli coefficients of the induced logical operators. Removal of $Z$-stabilizers may reduce distance, and the purpose of the third basic operation, addition of $X$-stabilizers, is to compensate for such losses. For the coherent noise model, we describe how to switch between computation and storage of intermediate results in a decoherence-free subspace by simply applying Pauli $X$ matrices.
		The approach to logical gate synthesis taken in prior work focuses on the code states, and results in sufficient conditions for a CSS code to be fixed by a transversal $Z$-rotation. In contrast, we derive necessary and sufficient conditions by analyzing the action of a transversal diagonal gate on the stabilizer group that determines the code. The power of our approach to logical gate synthesis is demonstrated by two proofs of concept: the $\llbr 2^{l+1}-2,2,2\rrbr$ triorthogonal code family, and the $\llbr 2^m,\binom{m}{r},2^{\min\{r,m-r\}}\rrbr$ quantum Reed-Muller code family.

	\end{abstract}
	\maketitle
	\section{Introduction} 
	\label{sec:intro}
	The challenge of quantum computing is to combine error resilience with universal computation. There are many finite sets of gates that are universal, and a standard choice is to augment the set of Clifford gates by a non-Clifford unitary \cite{boykin1999universal} such as the $T$ gate $\left(T=Z^{1/4}\right)$. Gottesman and Chuang \cite{gottesman1999demonstrating} defined the \emph{Clifford hierarchy} when introducing the teleportation model of quantum computing. The first level is the \emph{Pauli group}. The second level is the \emph{Clifford group}, which consists of unitary operators that normalize the Pauli group. The $l^{\mathrm{th}}$ level consists of unitary operators that map Pauli operators to the $(l-1)^{\mathrm{th}}$ level under conjugation. The structure of the Clifford hierarchy has been studied extensively \cite{zeng2008semi,beigi2008c3,bengtsson2014order,cui2017diagonal,rengaswamy2019unifying,pllaha2020weyl}. For $l \ge 3$, the operators at level $l$ are not closed under matrix multiplication. However, the diagonal gates at each level $l$ of the hierarchy do form a group \cite{zeng2008semi,cui2017diagonal}, and the gates $Z^{1/2^{l-1}}$, $\mathrm{C}^{(i)}Z^{1/2^{j}}$ with $i+j=l-1$ generate this group \cite{zeng2008semi}. The generators at the next level $l+1$ can be obtained by taking a square root $\left(Z^{1/2^{l-1}}\to Z^{1/2^{l}}\right)$ or adding one more layer of control $\left(\mathrm{C}^{(i)}Z^{1/2^{j}} \to \mathrm{C}^{(i+1)}Z^{1/2^{j}}\right)$ as shown in Figure \ref{fig:elem_DCH}.
	
	Quantum error-correcting codes (QECCs) encode logical qubits into physical qubits, and protect information as it is transformed by logical gates. Given a logical diagonal operator among the generators of the diagonal Clifford hierarchy, we describe a general method for synthesizing a CSS code \cite{Calderbank-physreva96,Steane-physreva96} preserved by a diagonal physical gate which induces the target logical operator. Logical diagonal gates play a central role in quantum algorithms. In the Shor factoring algorithm \cite{shor1994algorithms,shor1999polynomial}, our method applies to the C$^{(i)}Z^{1/2^j}$ diagonal gates which play an essential role in period finding. In magic state distillation (MSD) \cite{bravyi2005universal,reichardt2005quantum,bravyi2012magic,anwar2012qutrit,campbell2012magic,landahl2013complex,campbell2017unified,haah2018codes,krishna2019towards,vuillot2019quantum}, the effectiveness of the protocol depends on engineering the interaction of a diagonal physical gate with the code states of a stabilizer code \cite{gottesman1997stabilizer,calderbank1998quantum}. Our method transforms a CSS code supporting a lower level logical operator to a CSS code supporting a higher level logical operator. The coefficients in the Pauli expansion of a diagonal gate satisfy a recursion that makes it possible to work backwards from a target logical gate.
	
	Throughout the paper, we make use of an explicit representation of the logical channel induced by a diagonal physical gate. We prepare an initial state, apply a physical gate, then measure a code syndrome $\bm{\mu}$, and finally apply a correction based on $\bm{\mu}$. For each syndrome, we expand the induced logical operator in the Pauli basis to obtain the \emph{generator coefficients} \cite{generator_coeff_framework} that capture state evolution. Intuitively, the diagonal physical gate preserves the code space if and only if the induced logical operator corresponding to the trivial syndrome is unitary. To support the objective of fault tolerance, we emphasize \emph{transversal} gates \cite{gottesman1997stabilizer}, which are tensor products of unitaries on individual code blocks. The approach taken in prior work is to focus on the code states, and to derive sufficient conditions for a stabilizer code to be fixed by a transversal $Z$-rotation\cite{bravyi2005universal,reichardt2005quantum,bravyi2012magic,campbell2012magic,landahl2013complex,campbell2017unified,haah2018codes,haah2018towers,vuillot2019quantum}.  In contrast we derive necessary and sufficient conditions by analyzing the action of a transversal diagonal gate on the stabilizer group that determines the code. An advantage of our approach is that we keep track of the induced logical operator. 
	
	The action of a diagonal physical operator $U_Z$ on code states depends very strongly on the signs of $Z$-stabilizers \cite{coherent_noise,debroy2021optimizing,generator_coeff_framework} and our generator coefficient framework captures how these signs change the logical operators induced by $U_Z$. For the coherent noise model, a judicious choice of signs creates a decoherence-free subspace, that enables data storage. We can switch between computation and storage by applying a Pauli matrix as described in Remark \ref{rem:sw_comp_storage}.
	
	Haah \cite{haah2018towers} used divisibility properties of classical codes to construct CSS codes with parameters $\llbr O(d^{l-1}),\Omega(d),d \rrbr$ that realize a transversal logical $Z^{1/2^{l-1}}$. Modulo Clifford gates, his construction includes the $\llbr 2^l,1,3 \rrbr$ punctured quantum Reed-Muller (QRM) codes \cite{landahl2013complex} that support a single logical $Z^{1/2^{l-2}}$ gate, and the family of $\llbr 6k+8,2k,2 \rrbr$ \emph{triorthogonal} code \cite{bravyi2012magic} that support a logical transversal $T$ gate. In contrast we introduce three basic operations - concatenation, removal of $Z$-stabilizers, and addition of $X$-stabilizers - that can be combined to synthesize an arbitrary logical diagonal gate. We present the 
	$\llbr 2^m, \binom{m}{r},2^{\min\{r,m-r\}}\rrbr$ QRM code family \cite{rengaswamy2020optimality,generator_coeff_framework} as a proof of concept. 
	
	\begin{figure}[!ht]
		\resizebox{\linewidth}{.5\linewidth}{%
			\begin{tikzpicture}
				\hspace{5pt}
				\node (phy) at (2.5,-0.3) {physical level};
				\node (log) at (5.3,2.5) {\rotatebox{270}{logical level}};
				\draw[thick,dashed] (5,5) -- (5,0) -- (0,0)  -- (5,5);
				\draw[->,blue] (0.5,0.2) -- (1.45,0.2);
				\draw[->,red] (1.5,0.25) -- (1.5,1.2);
				
				\draw[->,blue] (6,4.5) -- (6.95,4.5);
				\draw[->,red] (6.5,3) -- (6.5,4);
				\filldraw[very thick] (6.5,2.5) circle (0.075);
				\node (con) at (8.5,4.5) {\color{blue}Concatenation};
				\node (rem) at (9.25,3.5) {\color{red}Removing $Z$-stabilizers};
				\node (add) at (9.1,2.5) {Adding $X$-stabilizers};
				\node (exam) at (7,1.5) {\textbf{Example}:};
				\node (steane) at (7,-0.5) {\llbr 4,2,2\rrbr};
				\node (con_steane) at (9.5,-0.5) {\llbr 64,2,2\rrbr};
				\node (tri_orth) at (9.5,1) {\llbr 64,15,4\rrbr};
				\draw[->,blue] (steane) -- (con_steane);
				\draw[->,red] (9.25,-0.25) -- (9.25,0.75); 
				\filldraw[very thick] (9.8,0.25) circle (0.07);
				\node () at (9.5,0.25) {$+$};
				\draw[->,dashed] (7.5,-0.25) -- (8.75,1);
			\end{tikzpicture}
		}
		\caption{Three basic operations that can be combined to synthesize a CSS code with higher distance, preserved by a diagonal physical gate which induces a prescribed logical diagonal gate.}
		\label{fig:three_elem_ops}
	\end{figure}
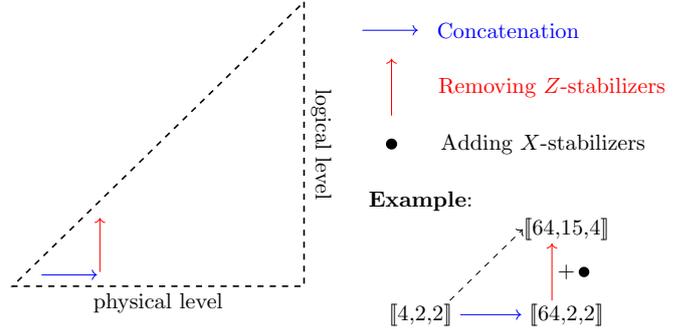
	
	Figure \ref{fig:three_elem_ops} shows how the three basic operations combine to provide CSS codes where both distance and the level of the induced logical operator are increasing. We now examine the three basic operations in more detail. 
	\begin{enumerate}
		\item \textbf{Concatenation}. Figure \ref{fig:three_elem_ops} shows that the level of the induced logical operator is bounded by that of the physical operator. Concatenation is depicted in Figure \ref{fig:concatenation} and described in Section \ref{sec:concatenation}. We double the number of physical qubits to increase the level of the physical diagonal gate and to make room for increasing the level of the induced logical operator. Theorem \ref{thm:concatenation} characterize the family of physical diagonal gates acting on the new code to induce the same logical gate. For example, the $\llbr 7,1,3\rrbr$ Steane code \cite{steane1996multiple} is preserved by a transversal Phase gate, $P^{\otimes 7}=\left({Z}^{1/2}\right)^{\otimes 7}$, which induces a logical $P^\dagger$ gate. By concatenating once, we obtain the $\llbr 14,1,3\rrbr$ CSS code that supports the logical $P^\dagger$ gate through a family of physical gates including the $I_2^{\otimes 7}\otimes P^{\otimes 7}$ physical gate at level $2$ and the transversal $T$ gate ($T^{\otimes 14}$) at level $3$. The higher level gate creates the opportunity to use the second basic operation to increase the level of the induced logical operator.
		
		\item \textbf{Removal of $Z$-stabilizers}. This is depicted in Figure \ref{fig:rem_Z_stab} and described in Section \ref{sec:rem_Z_stab}. We increase the code rate by removing a non-trivial $Z$-stabilizer to introduce a new logical qubit. Each generator coefficient in the expansion of the original logical operator splits into two new generator coefficients. We provide necessary and sufficient conditions for the new code to be preserved by the original physical diagonal gate. In this case we say that the removal/split is admissible. We describe three types of admissible split that increase the level of the induced logical operator, each built on a recursive relation on the generator coefficients. The two splits described in Figure \ref{fig:Z-rot_split_schemes} apply trigonometric identities. When the physical gate is a transversal $Z^{1/2^{l}}$, Theorem \ref{thm:specif_rem_Z} specifies the $Z$-stabilizer that is to be removed. For example, removing the all-one $Z$-stabilizer from the $\llbr 14,1,3 \rrbr$ code gives the $\llbr 14,2,2 \rrbr$ triorthogonal code, and the induced logical operator becomes a transversal $T^\dagger$. Distance may decrease after removing a $Z$-stabilizer, and the purpose of the third basic operation is to compensate this loss.
		
		\item \textbf{Addition of X-stabilizers}. This is depicted in Figure \ref{fig:add_X_stab} and described in Section \ref{sec:add_X_stab}. We derive  necessary and sufficient conditions for the new code after addition to be preserved by the original physical diagonal gate, and we say that the addition is admissible in this case. Our conditions require that half the generator coefficients associated with the trivial syndrome must vanish. For an admissible addition, we show that the level of the induced logical operator is unchanged. We may need to concatenate several times and to remove several independent $Z$-stabilizers in order to create sufficiently many zeros to enable an admissible addition. For example, consider the $\llbr 4,2,2\rrbr$ CSS code defined by the stabilizer group $\mathcal{S}=\langle X^{\otimes 4}, Z^{\otimes 4}\rangle$. Up to some logcial Pauli $Z$, the code realizes a logical C$Z$ by a transversal Phase gate. We first concatenate $4$ times to obtain the $\llbr 64,2,2\rrbr$ CSS code with the same logical operator, but induced by a physical transversal $T$ gate. Then, we remove $19$ independent $Z$-stabilizers to produce the $\llbr 64,21,2\rrbr$ code that realizes $15$ logical CC$Z$ gates (up to logical Pauli $Z$) induced by a physical transversal $T$ gate. Finally, we add $6$ independent $X$-stabilizers to increase the distance and arrive the $\llbr 64,15,4\rrbr$ QRM code supporting the same physical and logical gates. 
	\end{enumerate}
	
	The next Section introduces notation and provides necessary background. Section \ref{sec:concatenation}, \ref{sec:rem_Z_stab}, and \ref{sec:add_X_stab} introduce concatenation, removal of $Z$-stabilizers, and addition of $X$-stabilizers respectively. 

	\section{Preliminaries and Notation}
	\label{sec:prelims}
	\subsection{The Pauli Group}
	\label{subsec:perlim_Pauli}
	Let $\imath\coloneqq \sqrt{-1}$ be the imaginary unit. 
	Any $2\times 2$ Hermitian matrix can be uniquely expressed as a real linear combination of the four single qubit Pauli matrices/operators
	\begin{align}
		I_2 \coloneqq \begin{bmatrix}
			1 & 0\\
			0 & 1
		\end{bmatrix},~  
		X \coloneqq \begin{bmatrix} 
			0 & 1\\
			1 & 0
		\end{bmatrix},~  
		Z \coloneqq  \begin{bmatrix} 
			1 & 0\\
			0 & -1
		\end{bmatrix}, 
	\end{align}
	and $Y\coloneqq \imath XZ$.
	The operators satisfy 
	$
	X^2= Y^2= Z^2=I_2,~  X Y=- Y X,~  X Z=- Z X,~ \text{ and }  Y Z=- Z Y.
	$
	
	Let $\F_2 = \{0,1\}$ denote the binary field. Let $n\ge 1$ and $N=2^n$. Let $A \otimes B$ denote the Kronecker product (tensor product) of two matrices $A$ and $B$. Given binary vectors $\bm{a}=[a_1,a_2,\dots,a_n]$ and $\bm{b}=[b_1,b_2,\dots,b_n]$ with $a_i,b_j =0$ or $1$, we define the operators
	\begin{align}
		D(\bm{a},\bm{b})&\coloneqq X^{a_1} Z^{b_1}\otimes \cdots \otimes  X^{a_n} Z^{b_n},\\
		E(\bm{a},\bm{b}) &\coloneqq\imath^{\bm{a}\bm{b}^T \bmod 4}D(\bm{a},\bm{b}).
	\end{align}
	
	We often abuse notation and write $\bm{a}, \bm{b} \in \F_2^n$, though entries of vectors are sometimes interpreted in $\mathbb{Z}_4 = \{ 0,1,2,3 \}$. Note that $D(\bm{a},\bm{b})$ can have order $1,2$ or $4$, but $E(\bm{a},\bm{b})^2=\imath^{2\bm{a}\bm{b}^T}D(\bm{a},\bm{b})^2=\imath^{2ab^T}( \imath^{2\bm{a}\bm{b}^T} I_N)=I_N$. The $n$-qubit \textit{Pauli group} is defined as
	\begin{equation}
		\mathcal{HW}_N \coloneqq\{\imath^\kappa D(\bm{a},\bm{b}): \bm{a},\bm{b}\in \F_2^n, \kappa\in \Z_{4} \},
	\end{equation}
	where $\Z_{2^l} = \{0,1,\dots,2^l-1\}$.
	The $n$-qubit Pauli matrices form an orthonormal basis for the vector space of $N\times N$ complex matrices ($\C^{N\times N}$) under the normalized Hilbert-Schmidt inner product $\langle A,B\rangle \coloneqq \mathrm{Tr}(A^\dagger B)/N$ \cite{gottesman1997stabilizer}. 
	
	We use the \textit{Dirac notation}, $|\cdot \rangle$ to represent the basis states of a single qubit in $\C^2$. For any $\bm{v}=[v_1,v_2,\cdots, v_n]\in \F_2^n$, we define $|\bm{v}\rangle=|v_1\rangle\otimes|v_2\rangle\otimes\cdots\otimes|v_n\rangle$, the standard basis vector in $\C^N$ with $1$ in the position indexed by $\bm{v}$ and $0$ elsewhere. 
	We write the Hermitian transpose of $|\bm{v}\rangle$ as $\langle \bm{v}|=|\bm{v}\rangle^\dagger$. 
	We may write an arbitrary $n$-qubit quantum state as $|\psi\rangle=\sum_{\bm{v}\in \F_2^n} \alpha_{\bm{v}} |\bm{v}\rangle \in \C^N$, where $\alpha_{\bm{v}}\in \C$ and $\sum_{\bm{v}\in\F_2^n}|\alpha_{\bm{v}}|^2=1$. The Pauli matrices act on a single qubit as
	$X\ket{0}=\ket{1},  X\ket{1}=\ket{0},  Z\ket{0}=\ket{0}, \text{ and }  Z\ket{1}=-\ket{1}.$
	
	The symplectic inner product is $\langle [\bm{a},\bm{b}],[\bm{c},\bm{d}]\rangle_S=\bm{a}\bm{d}^T+\bm{b}\bm{c}^T \bmod 2$. Since $ X Z=- Z X$, we have 
	\begin{equation}
		E(\bm{a},\bm{b})E(\bm{c},\bm{d})=(-1)^{\langle [\bm{a},\bm{b}],[\bm{c},\bm{d}]\rangle_S}E(\bm{c},\bm{d})E(\bm{a},\bm{b}).
	\end{equation}
	
	\subsection{The Clifford Hierarchy}
	\label{subsec:perlim_Clifford}
	The \textit{Clifford hierarchy} of unitary operators was introduced in \cite{gottesman1999demonstrating}. The first level of the hierarchy is defined to be the Pauli group $\mathcal{C}^{(1)}=\mathcal{HW}_N$. For $l\ge 2$, the levels $l$ are defined recursively as 
	\begin{equation}\label{eqn:def_Cliff_hierarchy}
		\mathcal{C}^{(l)}:=\{U\in \mathbb{U}_N: U \mathcal{HW}_N U^\dagger\subset \mathcal{C}^{(l-1)}\},
	\end{equation}
	where $\mathbb{U}_N$ is the group of $N\times N$ unitary matrices. The second level is the Clifford Group, $\mathcal{C}^{(2)}$, which can be generated (up to overall phases) using the ``elementary" unitaries \textit{Hadamard}, \textit{Phase}, and either of \textit{Controlled-NOT} (C$X$) or \textit{Controlled-$Z$} (C$Z$) defined respectively as
	\begin{equation}
		H\coloneqq \begin{bmatrix}
			1 & 1\\
			1 & -1
		\end{bmatrix} , 
		P\coloneqq\begin{bmatrix}
			1 & 0\\
			0 & \imath
		\end{bmatrix} , 
	\end{equation}
	\begin{align}
		\text{C}Z_{ab}  &\coloneqq \ket{0}\bra{0}_a \otimes (I_2)_b + \ket{1}\bra{1}_a \otimes Z_b,\\~
		\text{C}X_{a \rightarrow b}  &\coloneqq \ket{0}\bra{0}_a \otimes (I_2)_b + \ket{1}\bra{1}_a \otimes X_b.
	\end{align}
	
	Note that Clifford unitaries in combination with \emph{any} unitary from a higher level can be used to approximate any unitary operator arbitrarily well~\cite{boykin1999universal}. 
	Hence, they form a universal set for quantum computation. A widely used choice for the non-Clifford unitary is the $T$ gate in the third level defined by 
	\begin{equation}
		T:=\begin{bmatrix}
			1 & 0\\
			0 & e^{\frac{\imath\pi}{4}}
		\end{bmatrix}=
		\sqrt{P}=Z^{\frac{1}{4}}\equiv 
		\begin{bmatrix}
			e^{-\frac{\imath\pi}{8}} & 0\\
			0 & e^{\frac{\imath\pi}{8}}
		\end{bmatrix}
		=e^{-\frac{\imath\pi}{8} Z}.
	\end{equation}
	Let $\mathcal{D}_N$ be the $N\times N$ diagonal matrices, and $\mathcal{C}^{(l)}_d \coloneqq \mathcal{C}^{(l)} \cap \mathcal{D}_N$. While $\mathcal{C}^{(l)}$ for $l\ge 3$ do not form a group any more, the diagonal gates in each level of the hierarchy, $\mathcal{C}^{(l)}_d$, form a group.  Note that $\mathcal{C}^{(l)}_d$ can be generated using the ``elementary" unitaries  C$^{(0)}Z^{\frac{1}{2^{l}}}$, C$^{(1)}Z^{\frac{1}{2^{l-1}}}, \dots, $C$^{(l-2)}Z^{\frac{1}{2}}$, C$^{(l-1)} Z$ \cite{zeng2008semi}, where C$^{(i)}Z^{\frac{1}{2^j}} \coloneqq \sum_{\bm{u}\in \F_2^{i+1}}\ket{\bm{u}}\bra{\bm{u}} +e^{\imath \frac{\pi}{2^j}}\ket{\bm{1}}\bra{\bm{1}}$ and here $\bm{1}\in\F_2^{i+1}$ is the vector consists of all ones.

	\subsection{Stabilizer Codes}
	\label{subsec:perlim_stab}
	We define a stabilizer group $\mathcal{S}$ to be a commutative subgroup of the Pauli group $\mathcal{HW}_N$, where every group element is Hermitian and no group element is $-I_N$. We say $\mathcal{S}$ has dimension $r$ if it can be generated by $r$ independent elements as $\mathcal{S}=\langle \nu_i E(\bm{c_i},\bm{d_i}): i=1,2,\dots, r \rangle$, where $\nu_i\in\{\pm1\}$ and $\bm{c_i},\bm{d_i}\in \F_2^n$. Since $\mathcal{S}$ is commutative, we must have $\langle [\bm{c_i},\bm{d_i}],[\bm{c_j},\bm{d_j}]\rangle_S=\bm{c_i}\bm{d_j}^T+\bm{d_i}\bm{c_j}^T=0\bmod 2$.
	
	Given a stabilizer group $\mathcal{S}$, the corresponding \textit{stabilizer code} is the fixed subspace $\mathcal{V}(\mathcal{S)}:=\{|\psi\rangle \in \C^N: g|\psi\rangle=|\psi\rangle \text{ for all } g\in \mathcal{S} \}$. 
	We refer to the subspace $\mathcal{V}(\mathcal{S})$ as an $\left[\left[n,k,d\right]\right]$ stabilizer code because it encodes $k:=n-r$ \text{logical} qubits into $n$ \textit{physical} qubits. The minimum distance $d$ is defined to be the minimum weight of any operator in $\mathcal{N}_{\mathcal{HW}_N}\left(\mathcal{S}\right)\setminus \mathcal{S}$. Here, the weight of a Pauli operator is the number of qubits on which it acts non-trivially (i.e., as $ X,~Y$ or $ Z$), and $\mathcal{N}_{\mathcal{HW}_N}\left(\mathcal{S}\right)$ denotes the normalizer of $\mathcal{S}$ in $\mathcal{HW}_N$. 
	
	For any Hermitian Pauli matrix $E\left(\bm{c},\bm{d}\right)$ and $\nu\in\{\pm 1\}$, the operator $\frac{I_N+\nu E\left(\bm{c},\bm{d}\right)}{2}$ projects onto the $\nu$-eigenspace of $E\left(\bm{c},\bm{d}\right)$. Thus, the projector onto the codespace $\mathcal{V}(\mathcal{S})$ of the stabilizer code defined by $\mathcal{S}=\langle \nu_i E\left(\bm{c_i},\bm{d_i}\right): i=1,2,\dots, r \rangle$ is 
	\begin{equation}
		\Pi_{\mathcal{S}}=\prod_{i=1}^{r}\frac{\left(I_N+\nu_i E\left(\bm{c_i},\bm{d_i}\right)\right)}{2}=\frac{1}{2^r}\sum_{j=1}^{2^r}\epsilon_j E\left(\bm{a_j},\bm{b_j}\right),
	\end{equation}
	where $\epsilon_j\in \{\pm 1 \}$ is a character of the group $\mathcal{S}$, and is determined by the signs of the generators that produce $E(\bm{a_j},\bm{b_j})$: $\epsilon_jE\left(\bm{a_j},\bm{b_j}\right)=\prod_{t\in J\subset \{1,2,\dots,r\} } \nu_t E\left(\bm{c_t},\bm{d_t}\right)$ for a unique $J$.
	
	A \textit{CSS (Calderbank-Shor-Steane) code} is a particular type of stabilizer code with generators that can be separated into strictly $X$-type and strictly $Z$-type operators. Consider two classical binary codes $\mathcal{C}_1,\mathcal{C}_2$ such that $\mathcal{C}_2\subset \mathcal{C}_1$, and let $\mathcal{C}_1^\perp$, $\mathcal{C}_2^\perp$ denote the dual codes. Note that $\mathcal{C}_1^\perp\subset \mathcal{C}_2^\perp$. Suppose that $\mathcal{C}_2 = \langle \bm{c_1},\bm{c_2},\dots,\bm{c_{k_2}} \rangle$ is an $[n,k_2]$ code and $\mathcal{C}_1^\perp =\langle \bm{d_1},\bm{d_2}\dots,\bm{d_{n-k_1}}\rangle$ is an $[n,n-k_1]$ code.
	Then, the corresponding CSS code has the stabilizer group 
	\begin{align*}
		\mathcal{S} 
		&=\langle \nu_{(\bm{c_i},\bm{0})} E\left(\bm{c_i},\bm{0}\right), \nu_{(\bm{0},\bm{d_j})} E\left(\bm{0},\bm{d_j}\right)\rangle_{i=1;~j=1}^{i=k_2;~j=n-k_1} \nonumber\\
		&=\{\epsilon_{(\bm{a},\bm{0})} \epsilon_{(\bm{0},
			\bm{b})} E\left(\bm{a},\bm{0}\right)E\left(\bm{0},\bm{b}\right): \bm{a}\in \mathcal{C}_2, \bm{b}\in \mathcal{C}_1^\perp\}, 
	\end{align*}
	where $\nu_{(\bm{c_i},\bm{0})},\nu_{(\bm{0},\bm{d_j})},\epsilon_{(\bm{a},\bm{0})},\epsilon_{(\bm{0},
		\bm{b})}  \in\{\pm 1 \}$.
	We capture sign information through character vectors 
	$\bm{y}\in \F_2^n/\mathcal{C}_1,\bm{r} \in \F_2^n/\mathcal{C}_2^\perp$ such that for any $ \epsilon_{(\bm{a},\bm{0})} \epsilon_{(\bm{0},\bm{b})} E\left(\bm{a},\bm{0}\right)E\left(\bm{0},\bm{b}\right) \in S$, we have $ \epsilon_{(\bm{a},\bm{0})} = (-1)^{\bm{a}\bm{r}^T}$ and $ \epsilon_{(\bm{0},\bm{b})} = (-1)^{\bm{b}\bm{y}^T}$. 
	If $\mathcal{C}_1$ and $\mathcal{C}_2^\perp$ can correct up to $t$ errors, then $S$ defines an $\left[\left[n, k_1-k_2, d\right]\right]$ CSS code with $d\ge 2t+1$, which we will represent as CSS($X,\mathcal{C}_2,\bm{r};Z,\mathcal{C}_1^\perp, \bm{y}$). If $G_2$ and $G_1^\perp$ are the generator matrices for $\mathcal{C}_2$ and $\mathcal{C}_1^\perp$ respectively, then the $(n-k_1+k_2)\times (2n)$ matrix
	\begin{equation}
		G_{\mathcal{S}}=\left[\begin{array}{c|c}
			G_2 &  \\ \hline
			&  G_1^\perp
		\end{array} \right]
	\end{equation}
	generates $\mathcal{S}$. 

	Since we consider diagonal gates, the signs of $X$-stabilizers do not matter. We then assume $\bm{r}=\bm{0}$ in the rest of this paper. 
	
	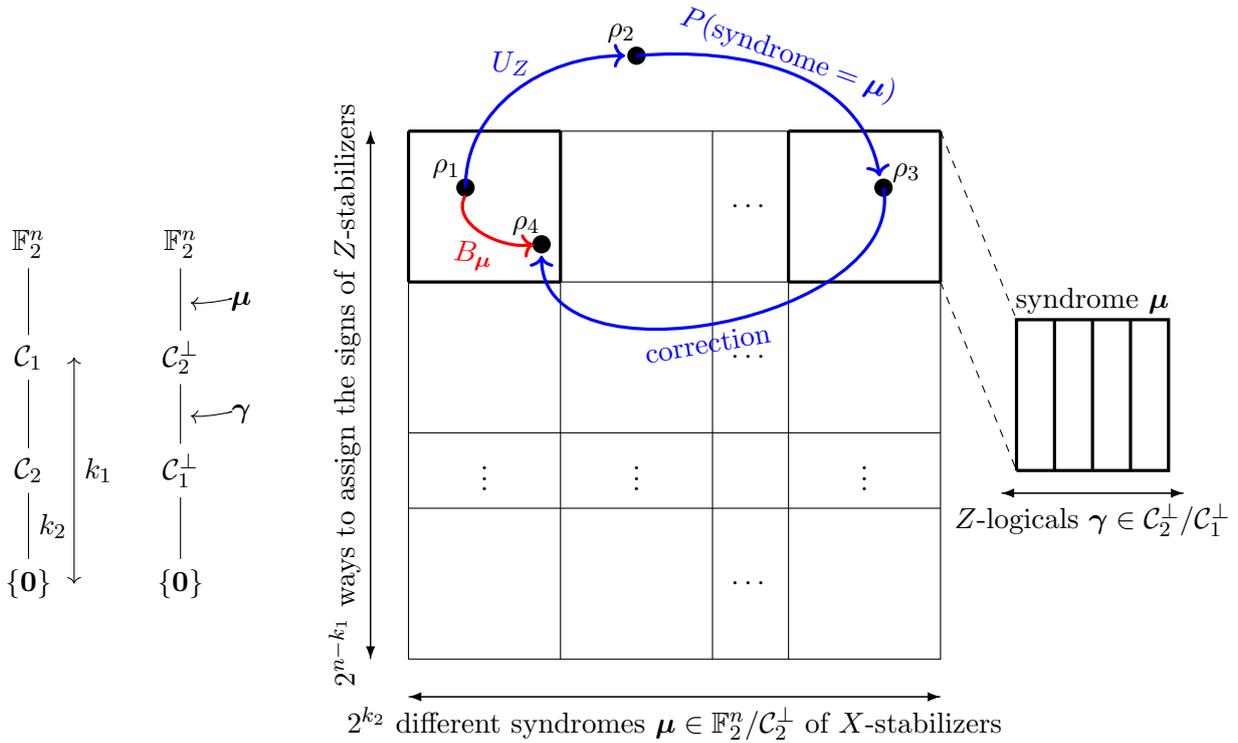
\begin{figure*}[h!]
		\centering
		\begin{tikzpicture}
			\node (Z) at (0,0) {$\{ \bm{0} \}$};
			\node (C2) at (0,1.5) {$\mathcal{C}_2$};
			\node (C1) at (0,3) {$\mathcal{C}_1$};
			\node (F2m) at (0,4.5) {$\mathbb{F}_2^{n}$};
			\path[draw] (Z) -- (C2) node[midway,right] {{$k_2$}} -- (C1) node[midway,left] {{}} -- (F2m) node[midway,left] {{}};
			\path[draw,<->,black] (0.6,0) -- (0.6,3) node [midway,right] {$k_1$}; 
			\node (Zp) at (2,0) {$\{ \bm{0} \}$};
			\node (C1p) at (2,1.5) {$\mathcal{C}_1^{\perp}$};
			\node (C2p) at (2,3) {$\mathcal{C}_2^{\perp}$};
			\node (F2m) at (2,4.5) {$\mathbb{F}_2^{n}$};
			\node (mu) at (2.8,3.75) {$\bm{\mu}$};
			\node (frommu) at (2.5,3.75) {};
			\node (tomu) at (2,3.75) {};
			\node (gamma) at (2.8,2.25) {$\bm{\gamma}$};
			\node (fromgamma) at (2.5,2.25) {};
			\node (togamma) at (2,2.25) {};
			\draw[->,black] (frommu) to [out=10,in=-10] (tomu);
			\draw[->,black] (fromgamma) to [out=10,in=-10] (togamma);
			
			\path[draw] (Zp) -- (C1p) node[midway,left] {{}} -- (C2p) node[midway,left] {{}} -- (F2m) node[midway,left] {{}};
			\tikzmath{\x1 = 5; \y1 =-1; 
				\x2 = \x1 + 7; \y2 =\y1 +7; } 
			
			\draw[] (\x1,\y1) -- (\x2,\y1) -- (\x2,\y2) -- (\x1,\y2) -- (\x1,\y1);
			\draw[] (\x1,\y1+2) -- (\x2,\y1+2);
			\draw[] (\x1,\y1+3) -- (\x2,\y1+3);
			\draw[] (\x1,\y1+5) -- (\x2,\y1+5);
			\draw[] (\x1+2,\y1) -- (\x1+2,\y2);
			\draw[] (\x1+4,\y1) -- (\x1+4,\y2);
			\draw[] (\x1+5,\y1) -- (\x1+5,\y2);
			\draw[-Latex] (\x1-0.5,\y1) -- (\x1-0.5,\y2) node [midway,left] {\rotatebox{90}{$2^{n-k_1}$ ways to assign the signs of $Z$-stabilizers}}; 
			\draw[Latex-] (\x1-0.5,\y1) -- (\x1-0.5,\y2) node [midway,left] {};
			
			\draw[-Latex] (\x1,\y1-0.5) -- (\x2,\y1-0.5) node [midway,below] {$2^{k_2}$ different syndromes $\bm{\mu} \in \F_2^n/\mathcal{C}_2^\perp$ of $X$-stabilizers}; 
			\draw[Latex-] (\x1,\y1-0.5) -- (\x2,\y1-0.5) node [midway,below] {}; 
			
			\node (vdots1) at (\x1+1,\y1+2.5) {$\vdots$};
			\node (vdots2) at (\x1+3,\y1+2.5) {$\vdots$};
			\node (vdots2) at (\x1+6,\y1+2.5) {$\vdots$};
			\node (cdots1) at (\x2-2.5,\y1+1) {$\cdots$};
			\node (cdots2) at (\x2-2.5,\y1+4) {$\cdots$};
			\node (cdots2) at (\x2-2.5,\y1+6) {$\cdots$};
			\draw[very thick] (\x1,\y2) -- (\x1+2,\y2);
			\draw[very thick] (\x1,\y2) -- (\x1,\y2-2);
			\draw[very thick] (\x1+2,\y2-2) -- (\x1+2,\y2);
			\draw[very thick] (\x1+2,\y2-2) -- (\x1,\y2-2);
			
			\draw[very thick] (\x2,\y2) -- (\x2-2,\y2);
			\draw[very thick] (\x2,\y2) -- (\x2,\y2-2);
			\draw[very thick] (\x2-2,\y2-2) -- (\x2-2,\y2);
			\draw[very thick] (\x2-2,\y2-2) -- (\x2,\y2-2);
			\filldraw[very thick](\x1+0.75,\y2-0.75) circle (0.1);
			\node (rho1) at (\x1+0.5,\y2-0.5) {$\rho_1$};
			
			\filldraw[very thick](\x1+3,\y2+1) circle (0.1);
			\node (rho2) at (\x1+2.8,\y2+1.3) {$\rho_2$};
			
			\filldraw[very thick](\x2-0.75,\y2-0.75) circle (0.1);
			\node (rho3) at (\x2-0.45,\y2-0.55) {$\rho_3$};
			
			\filldraw[very thick](\x1+1.75,\y2-1.5) circle (0.1);
			\node (rho4) at (\x1+1.55,\y2-1.25) {$\rho_4$};
			
			\draw[->,blue,very thick] (\x1+0.75,\y2-0.75) to [out=90,in=180] (\x1+2.85,\y2+1);
			\node[blue] (U_Z) at (\x1+1.35,\y2+0.9) {$U_Z$};
			
			\draw[->,blue,very thick] (\x1+3,\y2+1) to [out=5,in=100] (\x2-0.8,\y2-0.6);
			\node[blue] (msmt) at (\x2-2,\y2+1) {\rotatebox{-20}{$P(\text{syndrome}=\bm{\mu})$}};
			
			\draw[->,blue,very thick] (\x2-0.75,\y2-0.75) to [out=-80,in=-90] (\x1+1.75,\y2-1.65);
			\node[blue] (msmt) at (\x2-3,\y2-2.8) {\rotatebox{10}{correction 
			}};
			
			\draw[->,red,very thick] (\x1+0.75,\y2-0.85) to [out=-120,in=-170] (\x1+1.65,\y2-1.5);
			\node[red] (ILO) at (\x1+0.85,\y2-1.65) {\rotatebox{-5}{$B_{\bm{\mu}}$}};
			\draw[very thick] (\x2+1,\y2-4.5) -- (\x2+3,\y2-4.5) -- (\x2+3,\y2-2.5) -- (\x2+1,\y2-2.5) -- (\x2+1,\y2-4.5);
			
			\draw[dashed] (\x2,\y2) to (\x2+1,\y2-2.5);
			\draw[dashed] (\x2,\y2-2) to (\x2+1,\y2-4.5);
			
			\draw[very thick] (\x2+1.5,\y2-4.5) -- (\x2+1.5,\y2-2.5);
			\draw[very thick] (\x2+2,\y2-4.5) -- (\x2+2,\y2-2.5);
			\draw[very thick] (\x2+2.5,\y2-4.5) -- (\x2+2.5,\y2-2.5);
			\node (mu) at (\x2+2,\y2-2.25) {syndrome $\bm{\mu}$};
			
			\draw[-Latex] (\x2+3.2,\y2-4.8) -- (\x2+0.8,\y2-4.8) node [midway,below] {$Z$-logicals $\bm{\gamma}\in \mathcal{C}_2^\perp /\mathcal{C}_1^\perp$}; 
			\draw[Latex-] (\x2+3.2,\y2-4.8) -- (\x2+0.8,\y2-4.8) node [midway,below] {}; 
			
		\end{tikzpicture}
		\caption{
			The $2^{n-k_1}$ rows of the array are indexed by the $\llbr n,k_1-k_2,d \rrbr $ CSS codes corresponding to all possible signs of the $Z$-stabilizer group. The $2^{k_2}$ columns of the array are indexed by all possible $X$-syndromes $\bm{\mu}$. The logical operator $B_{\bm{\mu}}$ is induced by (1) preparing any code state $\rho_1$; (2) applying a diagonal physical gate $U_Z$ to obtain $\rho_2$; (3) using $X$-stabilizers to measure $\rho_2$, obtaining the syndrome $\bm{\mu}$ with probability $p_{\bm{\mu}}$, and the post-measurement state $\rho_3$; (4) applying a Pauli correction to $\rho_3$, obtaining $\rho_4$. Graph from \cite{generator_coeff_framework}. 
		}
		\label{fig:GC_frame} 
	\end{figure*}
	
	\subsection{Generator Coefficient Framework}
	\label{subsec:perlim_GCs}
	The \emph{Generator Coefficient Framework} was introduced in \cite{generator_coeff_framework} to describe the evolution of stabilizer code states under a physical diagonal gate $U_Z = \sum_{\bm{u}\in \F_2^n} d_{\bm{u}}\ket{\bm{u}}\bra{\bm{u}}$. Note that $\ket{\bm{u}}\bra{\bm{u}} = \frac{1}{2^n}\sum_{\bm{v}\in\F_2^n} (-1)^{\bm{uv}^T} E(\bm{0},\bm{v})$. Then we may expand $U_Z$ in the Pauli basis 
	\begin{align}
		U_Z = \sum_{\bm{v}\in\F_2^n} f(\bm{v}) E(\bm{0},\bm{v}),
	\end{align}
	where 
	\begin{align}\label{eqn:coeff_UZ}
		f(\bm{v}) = \frac{1}{2^n} \sum_{\bm{u}\in\F_2^n} (-1)^{\bm{uv}^T} d_{\bm{u}}. 
	\end{align}
	Note that we can connect the coefficients in standard basis and Pauli basis as
	\begin{align}
		[f(\bm{v})]_{\bm{v}\in\F_2^n} =  [d_{\bm{u}}]_{\bm{u}\in\F_2^n} H_{2^n}, 
	\end{align}
	where $H_{2^n} = H \otimes H_{2^{n-1}} = H^{\otimes n}$ is the Walsh-Hadamard matrix. 
	
	We consider the average logical channel induced by $U_Z$ of an $\llbr n,k,d\rrbr$ CSS($X,\mathcal{C}_2;Z,\mathcal{C}_1^\perp, \bm{y}$) code as described in Figure \ref{fig:GC_frame}. 
	Let $B_{\bm{\mu}}$ be the induced logical operator corresponding to the syndrome $\bm{\mu}$. Then the evolution of code states can be described as 
	\begin{align}
		\rho_4= \sum_{\bm{\mu}\in \F_2^n/\mathcal{C}_2^\perp} B_{\bm{\mu}} \rho_1 B_{\bm{\mu}}^\dagger. 
	\end{align}
	The generator coefficients $A_{\bm{\mu},\bm{\gamma}}$ are obtained by expanding the logical operator $B_{\bm{\mu}}$ in terms of $Z$-logical Pauli operators $\epsilon_{(\bm{0},\bm{\gamma})}E(\bm{0},\bm{\gamma})$,
	\begin{align}\label{eqn:kraus_ops}
		B_{\bm{\mu}}=\epsilon_{(\bm{0},\bm{\gamma_{\mu}})}E(\bm{0},\bm{\gamma_{\mu}}) \sum_{\bm{\gamma}\in \mathcal{C}_2^\perp / \mathcal{C}_1^\perp} A_{\bm{\mu},\bm{\gamma}}~ \epsilon_{(\bm{0},\bm{\gamma})}E(\bm{0},\bm{\gamma}),
	\end{align}
	where $\epsilon_{(\bm{0},\bm{\gamma_{\mu}})}E(\bm{0},\bm{\gamma_{\mu}})$ models the $Z$-logical Pauli operator introduced by a decoder. 
	For each pair of a $X$-syndrome $\bm{\mu} \in \F_2^n / \mathcal{C}_2^\perp$ and a $Z$-logical $\bm{\gamma} \in \mathcal{C}_2^\perp / \mathcal{C}_1^\perp$, the generator coefficient $A_{\bm{\mu},\bm{\gamma}}$ corresponding to $U_Z$ is
	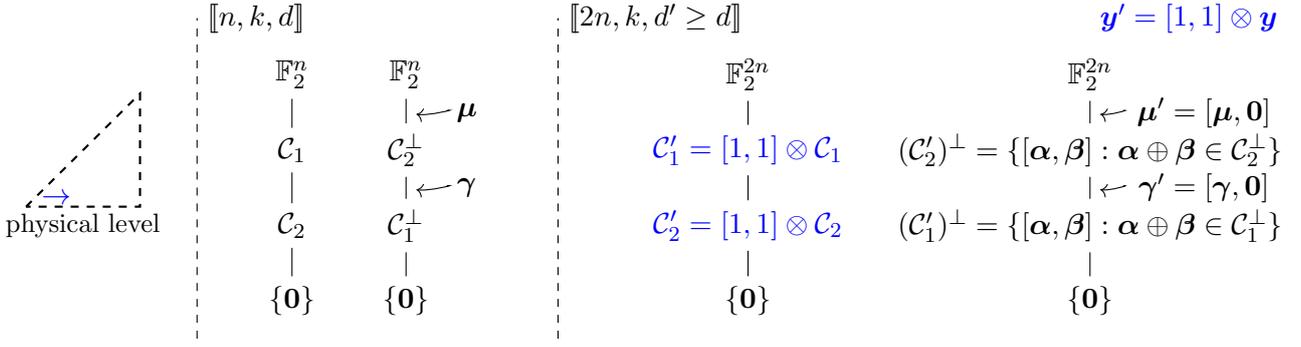
\begin{figure*}
		\centering
		\begin{tikzpicture}
			\node (Z) at (0,0) {$\{ \bm{0} \}$};
			\node (C2) at (0,1) {$\mathcal{C}_2$};
			\node (C1) at (0,2) {$\mathcal{C}_1$};
			\node (F2m) at (0,3) {$\mathbb{F}_2^{n}$};
			
			\path[draw] (Z) -- (C2)  -- (C1) -- (F2m);
			
			\node (Zp) at (1.5,0) {$\{ \bm{0} \}$};
			\node (C1p) at (1.5,1) {$\mathcal{C}_1^{\perp}$};
			\node (C2p) at (1.5,2) {$\mathcal{C}_2^{\perp}$};
			\node (F2m) at (1.5,3) {$\mathbb{F}_2^{n}$};
			\path[draw] (Zp) -- (C1p) -- (C2p)  -- (F2m); 
			\node (mu) at (2.3,2.5) {$\bm{\mu}$};
			\node (frommu) at (1.9,2.5) {};
			\node (tomu) at (1.5,2.5) {};
			\node (gamma) at (2.3,1.5) {$\bm{\gamma}$};
			\node (fromgamma) at (1.9,1.5) {};
			\node (togamma) at (1.5,1.5) {};
			\draw[->,black] (frommu) to [out=10,in=-10] (tomu);
			\draw[->,black] (fromgamma) to [out=10,in=-10] (togamma);
			\node (Z) at (6,0) {$\{ \bm{0} \}$};
			\node (C2) at (6,1) {$\color{blue}\mathcal{C}'_2=[1,1]\otimes \mathcal{C}_2$};
			\node (C1) at (6,2) {$\color{blue}\mathcal{C}'_1=[1,1]\otimes \mathcal{C}_1$};
			\node (F2m) at (6,3) {$\mathbb{F}_2^{2n}$};
			
			\path[draw] (Z) -- (C2)  -- (C1)  -- (F2m); 
			
			\node (Zp) at (10.5,0) {$\{ \bm{0} \}$};
			\node (C1p) at (10.5,1) {$(\mathcal{C}'_1)^\perp=\{[\bm{\alpha},\bm{\beta}]: \bm{\alpha}\oplus\bm{\beta} \in \mathcal{C}_1^\perp \}$};
			\node (C2p) at (10.5,2) {$(\mathcal{C}'_2)^\perp=\{[\bm{\alpha},\bm{\beta}]: \bm{\alpha}\oplus\bm{\beta} \in \mathcal{C}_2^\perp \}$};
			\node (F2m) at (10.5,3) {$\mathbb{F}_2^{2n}$};
			
			\path[draw] (Zp) -- (C1p) -- (C2p) -- (F2m); 
			
			\draw [dashed,black] (3.5,-0.5) -- (3.5,3.75)
			node[at end,right] {{$\llbr 2n,k,d'\ge d \rrbr$}};
			\draw [dashed,black] (-1.25,-0.5) -- (-1.25,3.75)
			node[at end,right] {{$\llbr n,k,d \rrbr$}};
			\node (y) at (11.8,3.75) {$\color{blue}\bm{y}' = [1,1]\otimes \bm{y}$};
			
			\node (mu_n) at (12,2.5) {$\bm{\mu}'=[\bm{\mu},\bm{0}]$};
			\node (frommu_n) at (10.8,2.5) {};
			\node (tomu_n) at (10.5,2.5) {};
			\node (gamma_n) at (12,1.5) {$\bm{\gamma}'=\left[\bm{\gamma},\bm{0}\right]$};
			\node (fromgamma_n) at (10.8,1.5) {};
			\node (togamma_n) at (10.5,1.5) {};
			\draw[->,black] (frommu_n) to [out=10,in=-10] (tomu_n);
			\draw[->,black] (fromgamma_n) to [out=10,in=-10] (togamma_n);
			
			\draw[thick,dashed] (-3.5,1.25) -- (-2,2.75) -- (-2,1.25)  -- (-3.5,1.25);
			\node[blue] (phys) at (-3.1,1.35) {$\rightarrow$};
			\node at (-2.75,1) {\small physical level};
		\end{tikzpicture}
		\caption{Concatenation transforms an $\llbr n,k,d \rrbr$ CSS code preserved by a diagonal gate $U_Z$ at level $l$ to a $\llbr 2n,k,d'\rrbr$ CSS code preserved by a family of diagonal gates $U'_Z$, some of which are at level $l+1$. The logical operator induced by $U'_Z$ coincides with the logical operator induced by $U_Z$. } 
		\label{fig:concatenation}
	\end{figure*}
	\begin{equation}\label{eqn:def_GC_UZ}
		A_{\bm{\mu},\bm{\gamma}}\coloneqq \sum_{\bm{z}\in \mathcal{C}_1^\perp+\bm{\mu}+\bm{\gamma}}\epsilon_{(\bm{0},\bm{z})}f(\bm{z}),
	\end{equation}
	where $\epsilon_{(\bm{0},\bm{z})} = (-1)^{\bm{z}\bm{y}^T}$ is the sign of the $Z$-stabilizer $E(\bm{0},\bm{z})$. Based on the structure of CSS codes, generator coefficients group the Pauli coefficients of $U_Z$ together, and tune them by the signs of $Z$-stabilizer. 
	We use \eqref{eqn:coeff_UZ} to simplify \eqref{eqn:def_GC_UZ} as
	\begin{align}\label{eqn:GC_C1_y}
		A_{\bm{\mu},\bm{\gamma}}
		&= \frac{1}{2^n} \sum_{\bm{u}\in\F_2^n} \sum_{\bm{z}\in \mathcal{C}_1^\perp+\bm{\mu}+\bm{\gamma}} (-1)^{\bm{zy}^T} (-1)^{\bm{zu}^T} d_{\bm{u}} \nonumber\\
		&=\frac{1}{|\mathcal{C}_1|}\sum_{\bm{u}\in \mathcal{C}_1+\bm{y}} (-1)^{(\bm{\mu}\oplus\bm{\gamma})(\bm{y}\oplus\bm{u})^T}  d_{\bm{u}}.
	\end{align}
	
	The diagonal physical gate $U_Z$ preserves a CSS($X,\mathcal{C}_2;Z,\mathcal{C}_1^\perp, \bm{y}$) codespace if and only if $B_{\bm{\mu}=\bm{0}}$ is a unitary \cite{generator_coeff_framework}, which is equivalent to requiring
	\begin{equation}\label{eqn:preserved_by_Uz}
		\sum_{\bm{\gamma}\in \mathcal{C}_2^\perp/\mathcal{C}_1^\perp} |A_{\bm{0},\bm{\gamma}}|^2=1.
	\end{equation}
	Note that \eqref{eqn:preserved_by_Uz} is also equivalent to $A_{\bm{\mu}\neq \bm{0},\bm{\gamma}} = 0 $ for all $\bm{\gamma}\in \mathcal{C}_2^\perp/\mathcal{C}_1^\perp$.
	The induced logical operator is 
	\begin{equation}\label{eqn:dig_log_op}
		U_Z^L 
		= \sum_{\bm{\alpha}\in \F_2^k} A_{\bm{0}, g(\bm{\alpha})} E(\bm{0},\bm{\alpha}),
	\end{equation}
	where $g: \F_2^k \to \mathcal{C}_2^\perp / \mathcal{C}_1^\perp $ is a bijective map defined by $g(\bm{\alpha}) =\bm{\alpha} G_{\mathcal{C}_2^\perp /\mathcal{C}_1^\perp}$, and $G_{\mathcal{C}_2^\perp /\mathcal{C}_1^\perp}$ is a generator matrix of the $Z$-logicals $\mathcal{C}_2^\perp /\mathcal{C}_1^\perp$. 
	
	\section{Climbing the Physical Hierarchy}
	\label{sec:concatenation}
	We need to climb the physical Clifford hierarchy because the level of the physical operator bounds that of the induced logical operator. Consider a physical diagonal gate 
	\begin{align}
		U_Z = \sum_{\bm{u}\in \F_2^n} d_{\bm{u}}\ket{\bm{u}}\bra{\bm{u}},
	\end{align}
	that preserves an $\llbr n,k,d \rrbr$ CSS($X,\mathcal{C}_2;Z,\mathcal{C}_1^\perp, \bm{y}$) code with $X$-distance
	\begin{align}
		d_X \coloneqq \min_{\bm{x}\in \mathcal{C}_1\setminus \mathcal{C}_2}w_H(\bm{x}),
	\end{align}
	and $Z$-distance 
	\begin{align}
		d_Z \coloneqq \min_{\bm{z}\in \mathcal{C}_2^\perp \setminus \mathcal{C}_1^\perp}w_H(\bm{z}).
	\end{align}
	We denote the logical operator induced by $U_Z$ as $U_Z^L$. The concatenation process described in Figure \ref{fig:concatenation} produces a 
	$\llbr 2n,k,d' \rrbr$ CSS($X,\mathcal{C}'_2;Z,(\mathcal{C}_1')^\perp, \bm{y}'$) code. Concatenation does not change the number of $Z$-logicals or the number of $X$-syndromes, and so the number of generator coefficients remains the same. We now show this code is preserved by an ensemble of physical gates, all inducing the same logical operator as $U^L_Z$. 
	\begin{theorem}
		\label{thm:concatenation}
		The $\llbr 2n,k,d' \rrbr$ CSS($X,\mathcal{C}'_2;Z,(\mathcal{C}_1')^\perp, \bm{y}'$) code is preserved by any diagonal physical gate
		\begin{align}
			U'_Z = \sum_{\bm{u}'\in\F_2^{2n}}d'_{\bm{u}'}\ket{\bm{u}'}\bra{\bm{u}'},
		\end{align} 
		for which $d'_{[\bm{u},\bm{u}]}=d_{\bm{u}}$ for all $\bm{u}\in\F_2^n$. 
		
		\noindent The minimum distance $d'\ge d$ and the induced logical operator $(U_Z')^L$ is equal to $U_Z^L$.
	\end{theorem}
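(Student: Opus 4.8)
The plan is to show that, once the syndrome and logical representatives are fixed as in Figure \ref{fig:concatenation}, the generator coefficients of the concatenated code are \emph{literally} those of the original code, and then to read off preservation, the induced logical gate, and the distance bound from the statements already established. Concretely: using $(\mathcal{C}'_1)^\perp=\{[\bm{\alpha},\bm{\beta}]:\bm{\alpha}\oplus\bm{\beta}\in\mathcal{C}_1^\perp\}$ and $(\mathcal{C}'_2)^\perp=\{[\bm{\alpha},\bm{\beta}]:\bm{\alpha}\oplus\bm{\beta}\in\mathcal{C}_2^\perp\}$, the map $[\bm{\alpha},\bm{\beta}]\mapsto\bm{\alpha}\oplus\bm{\beta}$ induces bijections $\F_2^{2n}/(\mathcal{C}'_2)^\perp\to\F_2^{n}/\mathcal{C}_2^\perp$ and $(\mathcal{C}'_2)^\perp/(\mathcal{C}'_1)^\perp\to\mathcal{C}_2^\perp/\mathcal{C}_1^\perp$, so every $X$-syndrome of the new code has a representative $\bm{\mu}'=[\bm{\mu},\bm{0}]$ and every $Z$-logical a representative $\bm{\gamma}'=[\bm{\gamma},\bm{0}]$. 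I would also take the generator matrix of the new $Z$-logicals to be $[\,G_{\mathcal{C}_2^\perp/\mathcal{C}_1^\perp}\mid\bm{0}\,]$, so that the bijection $g'$ appearing in \eqref{eqn:dig_log_op} is $g'(\bm{\alpha})=[g(\bm{\alpha}),\bm{0}]$.

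The core computation is to evaluate \eqref{eqn:GC_C1_y} on the concatenated data. Since $\mathcal{C}'_1+\bm{y}'=\{[\bm{w},\bm{w}]:\bm{w}\in\mathcal{C}_1+\bm{y}\}$, $|\mathcal{C}'_1|=|\mathcal{C}_1|$, the hypothesis gives $d'_{[\bm{w},\bm{w}]}=d_{\bm{w}}$, and the exponent reduces as $(\bm{\mu}'\oplus\bm{\gamma}')(\bm{y}'\oplus[\bm{w},\bm{w}])^T=(\bm{\mu}\oplus\bm{\gamma})(\bm{y}\oplus\bm{w})^T$ because the second halves of $\bm{\mu}'$ and $\bm{\gamma}'$ vanish, the sum collapses term by term and yields $A'_{[\bm{\mu},\bm{0}],[\bm{\gamma},\bm{0}]}=A_{\bm{\mu},\bm{\gamma}}$ for all $\bm{\mu},\bm{\gamma}$. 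Putting $\bm{\mu}=\bm{0}$ gives $\sum_{\bm{\gamma}'}|A'_{\bm{0},\bm{\gamma}'}|^2=\sum_{\bm{\gamma}}|A_{\bm{0},\bm{\gamma}}|^2=1$ by \eqref{eqn:preserved_by_Uz} for the original code, so $U'_Z$ preserves the new code; and \eqref{eqn:dig_log_op} with the chosen generator matrix gives $(U'_Z)^L=\sum_{\bm{\alpha}}A'_{\bm{0},[g(\bm{\alpha}),\bm{0}]}E(\bm{0},\bm{\alpha})=\sum_{\bm{\alpha}}A_{\bm{0},g(\bm{\alpha})}E(\bm{0},\bm{\alpha})=U_Z^L$.

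For the bound $d'\ge d$ I would argue combinatorially. From $\mathcal{C}'_1\setminus\mathcal{C}'_2=\{[\bm{x},\bm{x}]:\bm{x}\in\mathcal{C}_1\setminus\mathcal{C}_2\}$ the $X$-distance doubles, $d'_X=2d_X$. For any $[\bm{\alpha},\bm{\beta}]\in(\mathcal{C}'_2)^\perp\setminus(\mathcal{C}'_1)^\perp$ the vector $\bm{z}:=\bm{\alpha}\oplus\bm{\beta}$ lies in $\mathcal{C}_2^\perp\setminus\mathcal{C}_1^\perp$ and $w_H([\bm{\alpha},\bm{\beta}])=w_H(\bm{z})+2\,|\{i:\alpha_i=\beta_i=1\}|\ge w_H(\bm{z})\ge d_Z$, so $d'_Z\ge d_Z$. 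Combining this with the standard fact that a CSS code has minimum distance $\min(d_X,d_Z)$ — a nontrivial logical $E(\bm{a},\bm{b})$ has $\bm{a}\in\mathcal{C}_1$, $\bm{b}\in\mathcal{C}_2^\perp$, and either $\bm{a}\notin\mathcal{C}_2$, whence $w_H(\bm{a}+\bm{c})\ge d_X$ for every $\bm{c}\in\mathcal{C}_2$, or $\bm{b}\notin\mathcal{C}_1^\perp$, whence $w_H(\bm{b}+\bm{d})\ge d_Z$ for every $\bm{d}\in\mathcal{C}_1^\perp$ — gives $d'=\min(2d_X,d'_Z)\ge\min(d_X,d_Z)=d$. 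None of the steps is hard; the one place that needs attention is keeping the syndrome and logical representatives (and the $Z$-logical generator matrix) aligned across the two codes, so that the collapse of \eqref{eqn:GC_C1_y} is an equality of indexed coefficients and not merely of multisets — which is exactly what the choices $\bm{\mu}'=[\bm{\mu},\bm{0}]$, $\bm{\gamma}'=[\bm{\gamma},\bm{0}]$, $\bm{y}'=[1,1]\otimes\bm{y}$ in Figure \ref{fig:concatenation} are designed to guarantee.
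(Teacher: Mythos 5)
Your proposal is correct and follows essentially the same route as the paper's proof: both compute the generator coefficient $A'_{[\bm{\mu},\bm{0}],[\bm{\gamma},\bm{0}]}$ directly from \eqref{eqn:GC_C1_y}, parametrize $\mathcal{C}'_1+\bm{y}'$ as $\{[\bm{w},\bm{w}]:\bm{w}\in\mathcal{C}_1+\bm{y}\}$ to collapse it to $A_{\bm{\mu},\bm{\gamma}}$, and establish the distance bound via $d'_X=2d_X$ and $d'_Z\ge d_Z$. Your version is slightly more explicit about fixing coset representatives and the generator matrix so that the equality of coefficients is an indexed identity rather than one of multisets, but this is the same argument made careful, not a different one.
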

	\begin{proof}
		Let $d'_X,d'_Z$ be the $X$- and $Z$-distances for the CSS($X,\mathcal{C}'_2;Z,(\mathcal{C}_1')^\perp, \bm{y}'$) code. Given $\bm{x}'\in \mathcal{C}'_1\setminus \mathcal{C}'_2$, there exists $\bm{x}\in \mathcal{C}_1\setminus \mathcal{C}_2$ such that $\bm{x}'=[1,1]\otimes \bm{x}$, and so $d'_X=2d_X$. Given $[\bm{\alpha},\bm{\beta}]\in (\mathcal{C}'_2)^{\perp}\setminus (\mathcal{C}'_1)^{\perp}$, we have 
		\begin{align}
			w_H([\bm{\alpha},\bm{\beta}])=w_H(\bm{\alpha})+ w_H(\bm{\beta})\ge w_H(\bm{\alpha} \oplus \bm{\beta} ), 
		\end{align}
		and so $d_Z' \ge d_Z$. Concatenation doubles $X$-distance while maintaining $Z$-distance.  
		
		We now prove that $(U'_Z)^L=U_Z^L$ by showing the generator coefficients remain the same:
		\begin{align*}
			A'_{\bm{\mu}',\bm{\gamma}'}(U'_Z) 
			&= \frac{1}{|\mathcal{C}'_1|} \sum_{\bm{u}' \in \mathcal{C}'_1 +\bm{y}'} (-1)^{(\bm{\mu}'\oplus \bm{\gamma}')(\bm{y}' \oplus \bm{u}')^T} d'_{\bm{u}'} \\
			&= \frac{1}{|\mathcal{C}_1|} \sum_{\bm{u} \in \mathcal{C}_1 +\bm{y}} (-1)^{[\bm{\mu}\oplus\bm{\gamma},\bm{0}][\bm{y}\oplus\bm{u},\bm{y}\oplus\bm{u}]^T} d'_{[\bm{u},\bm{u}]} \\
			&= \frac{1}{|\mathcal{C}_1|} \sum_{\bm{u}\in \mathcal{C}_1+\bm{y}} (-1)^{(\bm{\mu}\oplus \bm{\gamma})(\bm{y}\oplus \bm{u})^T} d_{\bm{u}}\\
			&= A_{\bm{\mu},\bm{\gamma}}(U_Z).\numberthis
		\end{align*}
		Hence, concatenation brings more freedom of physical operators to realize the same logical operator.  
	\end{proof}
	We may partition $U'_Z$ into $2^n$ blocks, where the block indexed by $\bm{u}\in\F_2^n$ is a $2^n\times 2^n$ diagonal matrix diag$[d'_{[\bm{u},\bm{v}]}]$. Theorem \ref{thm:concatenation} specifies a single diagonal entry $d'_{[\bm{u},\bm{u}]}$ in each block. The remaining $2^{2n}-2^n$ entries can be freely chosen to design the unitary $U'_Z$. When $U_Z$ (on $n$ qubits) is a transversal $\mathrm{C}^{(i)}Z^{1/2^j}$ gate at level $i+j$ in the clifford hierarchy, we can choose $U_Z'$ to be the transversal $\mathrm{C}^{(i)}Z^{1/2^{j+1}}$ gate (on $2n$ qubits) at level $i+j+1$.
	\begin{remark}[Quadratic Form Diagonal (QFD) gates] 
		\label{rem:QFD_concat}
		\normalfont
		We now describe how to raise the level of a QFD gate $\tau_R^{(l)}\in \mathcal{C}_d^{(l)}$ at level $l$ in the Clifford hierarchy. Here
		\begin{equation}\label{eqn:qfd_def}
			\tau_R^{(l)} = \sum_{\bm{v}\in \F_2^n} \xi_l^{\bm{v}R\bm{v}^T \bmod {2^l}} \ket{\bm{v}}\bra{\bm{v}},
		\end{equation}
		where $\xi_l=e^{\imath \frac{\pi}{2^{l-1}}}$, and $R$ is an $n\times n$ symmetric matrix with entries in $\Z_{2^l}$, the ring of integers modulo $2^l$. Note that the exponent $vRv^T \in \Z_{2^l}$. 
		Rengaswamy et al. \cite{rengaswamy2019unifying} proved that QFD gates include all 1-local and 2-local diagonal gates in the Clifford hierarchy. We choose $U_Z' = \tau_{I_2\otimes R}^{(l+1)} \in \mathcal{C}_d^{(l+1)}$, and observe
		\begin{align}
			d'_{\bm{u},\bm{u}} = \xi_{l+1}^{2\bm{u}R\bm{u}^T} = \xi_{l}^{\bm{u}R\bm{u}^T}=d_{\bm{u}}.
		\end{align}
	\end{remark}
	
	\begin{example}[Climbing from $P^{\otimes 7}$ acting on the $\llbr 7,1,3 \rrbr$ Steane code to $T^{\otimes 14}$ acting on the $\llbr 14,1,3 \rrbr$ CSS code]
		\label{examp:concat_steane}
		\normalfont
		The Steane code \cite{steane1996multiple} is a CSS($X,\mathcal{C}_2;Z,\mathcal{C}_1^\perp,\bm{y}=\bm{0}$)  code with generator matrix
		\begin{align}
			\setlength\aboverulesep{0pt}\setlength\belowrulesep{0pt}
			\setlength\cmidrulewidth{0.5pt}
			G_{\mathcal{S}} = 
			\left[
			\begin{array}{c|c}
				H &  \\
				\hline
				& H\\
			\end{array}
			\right],
		\end{align}
		where $H$ is the parity-check matrix of the Hamming code:
		\begin{align}
			\setlength\aboverulesep{0pt}\setlength\belowrulesep{0pt}
			\setlength\cmidrulewidth{0.5pt}
			H = 
			\left[
			\begin{array}{ccccccc}
				1 & 1 & 1 & 1 & 0 & 0 & 0    \\
				1 & 1 & 0 & 0 & 1 & 1 & 0   \\
				1 & 0 & 1 & 0 & 1 & 0 & 1   \\
			\end{array}
			\right].
		\end{align}
		The only nontrivial $Z$-logical corresponds to the all one vector $\bm{1}$. After concatenation described in Figure \ref{fig:concatenation}, we obtain a $\llbr 14,1,3\rrbr$ CSS code.
		When $R=I_n$, $\tau_R^{(2)} = P^{\otimes n}$ and $\tau_{I_2\otimes R}^{(3)} = T^{\otimes 2n}$. 
		Let $A_{\bm{\mu},\bm{\gamma}}\left(\frac{\pi}{2}\right)$ and $A'_{\bm{\mu}',\bm{\gamma}'}\left(\frac{\pi}{4}\right)$ be the generator coefficients corresponding to $P^{\otimes 7}$ and $T^{\otimes 14}$ acts on the $\llbr 7,1,3 \rrbr$ and $\llbr 14,1,3\rrbr$ code respectively. Then, we have 
		\begin{align}
			A_{\bm{\mu}=\bm{0},\bm{\gamma}=\bm{0}}\left(\frac{\pi}{2}\right) 
			&= A'_{[\bm{0},\bm{0}],[\bm{1},\bm{0}]}\left(\frac{\pi}{4}\right) 
			= \cos\left(\frac{\pi}{4}\right), \nonumber \\
			A_{\bm{\mu}=\bm{0},\bm{\gamma}=\bm{1}}\left(\frac{\pi}{2}\right) 
			&= A'_{[\bm{0},\bm{0}],[\bm{1},\bm{0}]}\left(\frac{\pi}{4}\right) 
			= \imath\sin\left(\frac{\pi}{4}\right),
		\end{align}
		which implies that the invariance of $\llbr 7,1,3\rrbr$ under $P^{\otimes 7}$ and that of $\llbr 14,1,3 \rrbr$ under $T^{\otimes 14}$. It then follows from the expression of the induced logical operator in \eqref{eqn:dig_log_op} that both of the codes implement a logical $P^\dagger$. 
	\end{example}
	
	\begin{example}[Climbing from C$Z^{\otimes 2}$ acting on the $\llbr 4,2,2 \rrbr$ CSS code to C$P^{\otimes 4}$ acting on the $\llbr 8,2,2 \rrbr$ CSS code]
		\label{examp:concat_4,2,2}
		\normalfont
		Consider the $\llbr 4,2,2 \rrbr$ CSS($X,\mathcal{C}_2 ;Z,\mathcal{C}_1^\perp$) code with $\mathcal{C}_2 = \mathcal{C}_1^\perp = \{\bm{0},\bm{1}\}$.
		We may choose the generators of $Z$-logicals to be $\bm{\gamma_1} = [0,0,1,1]$ and $\bm{\gamma_2} = [0,1,1,0]$. Their generator coefficients coincide:
		\begin{align}
			&A_{\bm{\mu}=\bm{0},\bm{\gamma}=\bm{0}}(\mathrm{C}Z^{\otimes 2}) = 
			A'_{[\bm{0},\bm{0}],[\bm{0},\bm{0}]}(\mathrm{C}P^{\otimes 4}) = \frac{1}{2}, \nonumber \\
			&A_{\bm{\mu}=\bm{0},\bm{\gamma}=\bm{\gamma_1}}(\mathrm{C}Z^{\otimes 2}) =
			A'_{[\bm{0},\bm{0}],[\bm{\gamma_1},\bm{0}]}(\mathrm{C}P^{\otimes 4}) = -\frac{1}{2}, \nonumber \\
			& A_{\bm{\mu}=\bm{0},\bm{\gamma}=\bm{\gamma_2}}(\mathrm{C}Z^{\otimes 2})=
			A'_{[\bm{0},\bm{0}],[\bm{\gamma_2},\bm{0}]}(\mathrm{C}P^{\otimes 4}) = \frac{1}{2}, \nonumber \\
			&A_{\bm{\mu}=\bm{0},\bm{\gamma}=\bm{\gamma_1}\oplus \bm{\gamma_2}}(\mathrm{C}Z^{\otimes 2}) = 
			A'_{[\bm{0},\bm{0}],[\bm{\gamma_1}\oplus \bm{\gamma_2},\bm{0}]}(\mathrm{C}P^{\otimes 4}) = \frac{1}{2}.
		\end{align}
		Both cases realize a logical $Z_1 \circ$C$Z\coloneqq(Z\otimes I)\mathrm{C}Z$.
	\end{example}
	
	\begin{figure*}[!ht]
		\begin{tikzpicture}
			\draw[dashed,black] (8,0) -- (8,6.3);
			\hspace{-275pt}
			\node (b) at (12.75,6) {(a) Removing a non-trivial $Z$-stabilizer};
			\node (Z) at (12,0) {$\{ \bm{0} \}$};
			\node (C2) at (12,1.75) {$\mathcal{C}_2$};
			\node (C1) at (12,3.5) {$\mathcal{C}_1$};
			\node (F2m) at (12,5.25) {$\mathbb{F}_2^{n}$};
			\draw[->,black,dotted,thick] (11.8,4.375) to [out=180,in=180] (11.8,2.625);g
			\node (w0) at (11,3.5) {$\color{blue}\bm{w_0}$};
			\path[draw] (Z) -- (C2)  -- (C1) -- (F2m);
			
			\node (Zp) at (14.5,0) {$\{ \bm{0} \}$};
			\node (C1p) at (14.5,1.75) {$\mathcal{C}_1^{\perp}$};
			\node (C2p) at (14.5, 3.5) {$\mathcal{C}_2^{\perp}$};
			\node (F2m) at (14.5,5.25) {$\mathbb{F}_2^{n}$};
			\draw[->,black,dotted,thick] (14.7,0.875) to [out=-10,in=10] (14.7,2.625);
			\node (w0) at (15.5,1.75) {$\color{blue}\bm{\gamma_0}$};
			\path[draw] (Zp) -- (C1p) -- (C2p)  -- (F2m); 
			
			\hspace{530pt}
			\draw[] (3.5,5) -- (6.5,5) -- (6.5,3) -- (3.5,3) -- (3.5,5);
			\draw[] (3.5,4.7) -- (6.5,4.7);
			\draw[fill=red] (3.5,5) rectangle (3.8,4.7); 
			\node (gamma) at (5,5.3) {$\bm{\gamma}\in \mathcal{C}_2^\perp / \mathcal{C}_1^\perp$};
			\node (mu) at (3.3,4) {$\bm{\mu}$};
			\node (A_0,gamma) at (4.2,4.5) {\color{red}\small$A_{\bm{\mu},\bm{\gamma}}$};
			
			\draw[] (2,2) -- (8,2) -- (8,0) -- (2,0) -- (2,2);
			\draw[] (5,2) -- (5,0); 
			\draw[] (2,1.7) -- (8,1.7);
			\draw[fill=red] (2,2) rectangle (2.3,1.7); 
			\draw[fill=red] (5,2) rectangle (5.3,1.7); 
			\node (gamma') at (5,2.3) {$\bm{\gamma}'\in \langle \mathcal{C}_2^\perp / \mathcal{C}_1^\perp, {\color{blue}\bm{\gamma_0}} \rangle$};
			\node (mu') at (1.35,1) {$\bm{\mu}'=\bm{\mu}$};
			\node (A'_0,gamma) at (2.8,1.5) {\color{red}\small$A'_{\bm{\mu},\bm{\gamma}' = \bm{\gamma}}$};
			\node (A'_0,gamma) at (6.1,1.5) {\color{red}\small$A'_{\bm{\mu},\bm{\gamma}' =\bm{\gamma}\oplus \bm{\gamma_0}}$};
			
			\node (eqn) at (1.85,6) {(b) $\color{red}A_{\bm{\mu},\bm{\gamma}} = A'_{\bm{\mu},\bm{\gamma}} + A'_{\bm{\mu},\bm{\gamma}\oplus \bm{\gamma_0}}$}; 
			\draw[->] (1.25,2.5) to (1.25,3.5);
			\node (rm) at (0.25,3) {remove};
			\draw[<-] (1,2.5) to (1,3.5);
			\node (ad) at (1.65,3) {add};
		\end{tikzpicture}
		\caption{(a) Removing a $Z$-stabilizer $\bm{\gamma_0}$ creates a new $Z$-logical, and transforms an old $Z$-syndrome $\bm{w_0}$ into a new $X$-logical. (b) Removing/adding a $Z$-stabilizer induces splitting/grouping of generator coefficients.}
		\label{fig:rem_Z_stab}
	\end{figure*}
	
	\begin{remark}[Switching between Computation and Storage]
		\label{rem:sw_comp_storage}
		\normalfont
		It is the choice of character vector that distinguishes the method of concatenation depicted in Figure \ref{fig:concatenation} from the method of constructing a decoherence-free subspaces (DFS) described in \cite{coherent_noise}. Consider the graph where the vertices are the qubits involved in the support of some X-stabilizer, and where two vertices are joined by an edge if there exists a weight 2 $Z$-stabilizer involving these two qubits. Instead of choosing $\bm{y}'=[1,1]\otimes \bm{y}$, Liang, Hu et al. \cite{coherent_noise} balance the signs of $Z$-stabilizers by requiring that the support of $\bm{y}''$ include half the qubits in every connected component of the graph. The stabilizer group determines a resolution of the identity. 
		To change the signs of $Z$-stabilizers, we simply apply some physical Pauli $X$ to transform from one part of the resolution to the other part (see \cite[Example 3]{generator_coeff_framework} for more details). To determine the specific position to add these extra Pauli $X$, we consider the general encoding map 
		$g_e:\ket{\bm{\alpha}}_L\in \mathbb{F}_2^k \to \ket{\overline{\bm{\alpha}}}\in \mathcal{V}(\mathcal{S})$ of a CSS($X,\mathcal{C}_2,\bm{r};Z,\mathcal{C}_1^\perp, \bm{y}$) code \cite{generator_coeff_framework},
		\begin{equation} \label{eqn:gen_encode_map}
			\ket{\overline{\bm{\alpha}}}
			\coloneqq \frac{1}{\sqrt{|\mathcal{C}_2|}} \sum_{\bm{x}\in \mathcal{C}_2} (-1)^{\bm{x}\bm{r}^T}\ket{\bm{\alpha}  G_{\mathcal{C}_1/\mathcal{C}_2} \oplus \bm{x} \oplus \bm{y}},
		\end{equation}
		where $\bm{r},\bm{y}$ are the character vectors for $X$- and $Z$-stabilizers, and $G_{\mathcal{C}_1/\mathcal{C}_2}$ is a generator matrix of the $X$-logicals $\mathcal{C}_1/\mathcal{C}_2$. The positions of these Pauli $X$ correspond to the support of the difference of two character vectors $\bm{y}'-\bm{y}''$. 
		Hence it is simple to switch between computation and storage. Given a code that realizes a specific diagonal logical operator induced by the physical gate $U_Z$, we first apply the concatenation described in Figure \ref{fig:concatenation}. After concatenation, we choose $U'_Z = I_{N} \otimes U_Z$, at the same level as $U_Z$, to realize the same specific logical operator. We then apply some physical Pauli $X$ to change signs of $Z$-stabilizers and embed the logical information in a DFS. To continue the computation, we recover the stored results by applying the same Pauli $X$. Note that concatenation doubles the $X$-distance, which improves protection when we change the signs of $Z$-stabilizers. 
		
		For example, suppose our goal is to first implement a logical $P^\dagger$ and to wait for a while before calculating the next step. We can apply the physical $U'_Z=I_2^{\otimes 7}\otimes P^{\otimes 7}$ to the $\llbr 14,1,3\rrbr$ CSS code in Example \ref{examp:concat_steane} to realize the logical $P^\dagger$. Note that $\bm{y}'=\bm{0}\in\F_2^{14}$ and one choice of $\bm{y}''$ is $[1,0]\otimes \bm{1}_7 \in \F_2^{14}$. Then we can apply Pauli X alternatively to map the computed result in a DFS.
	\end{remark}
	To achieve more advanced computation, we need diagonal logical operators from  higher levels. Raising the level of a physical operator prepares the ground for climbing the logical hierarchy.

	\section{Climbing the Logical Hierarchy}
	\label{sec:rem_Z_stab}
	In this Section, we describe how to increase the level of an induced logical operator by judiciously removing $Z$-stabilizers from a CSS code. We start by considering a physical diagonal gate
	\begin{align}
		U_Z=\sum_{\bm{v}\in \F_2^n}f(\bm{v})E(\bm{0},\bm{v})
	\end{align}
	that preserves an $\llbr n,k,d \rrbr$ CSS$(X,\mathcal{C}_2;Z,\mathcal{C}_1^\perp,\bm{y})$ code. The induced logical channels are described by generator coefficients $A_{\bm{\mu},\bm{\gamma}}$ where $\bm{\mu} \in \F_2^n/\mathcal{C}_2^\perp$ and $\bm{\gamma}\in \mathcal{C}_2^\perp /  \mathcal{C}_1^\perp$. Let $\bm{\gamma_0}\in\mathcal{C}_1^\perp$ be a nontrivial $Z$-stabilizer. Set $\mathcal{C}_1^\perp = \langle (\mathcal{C}'_1)^\perp,\bm{\gamma_0} \rangle$, and set $\mathcal{C}'_1 = \langle \mathcal{C}_1,\bm{w_0} \rangle$, where $\bm{w_0} \in \F_2^n/\mathcal{C}_1$. If we remove $\bm{\gamma_0}$ from $\mathcal{C}_1^\perp$, then $\bm{\gamma_0}$ becomes a $Z$-logical for the $\llbr n,k+1,d'\le d \rrbr$ CSS$(X,\mathcal{C}_2;Z,(\mathcal{C}'_1)^\perp,\bm{y})$ code, as shown in Figure \ref{fig:rem_Z_stab}(a). Removing the $Z$-logical $\bm{\gamma_0}$ doubles the number of $Z$-logicals. Each generator coefficient $A_{\bm{\mu},\bm{\gamma}}$ associated with the original CSS code splits into two generator coefficients $A'_{\bm{\mu},\bm{\gamma}'=\bm{\gamma}}$ and $A'_{\bm{\mu},\bm{\gamma}'=\bm{\gamma}\oplus\bm{\gamma_0}}$ associated with the new code. We have 
	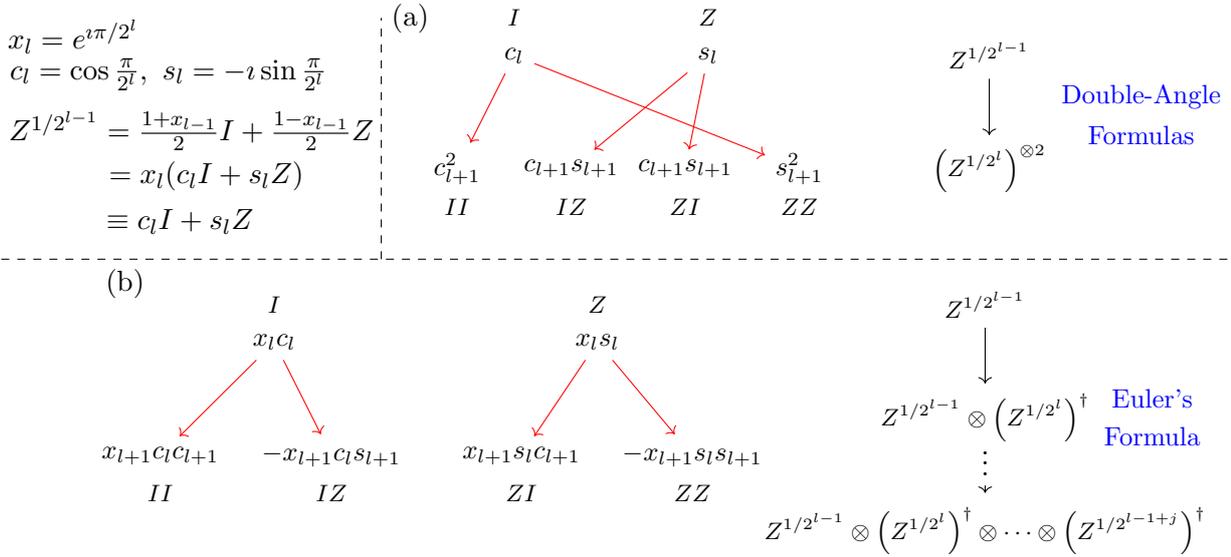
\begin{figure*}[!ht]
		\centering
		\begin{tikzpicture}
			\tikzmath{\x1 = 8.5; \y1 =-1.3; 
				\x2 = 5; \y2 =4.7; } 
			\draw [dashed,black] (3.5,7.8) -- (3.5,11)
			node[at end,right] {{(a)}};
			\draw [dashed,black] (-1.5,7.8) -- (14.8,7.8)
			node[pos=0.1,below] {{(b)}};
			
			\node () at (5.25,6+\x2) {\footnotesize $I$};
			\node () at (7.8,6+\x2) {\footnotesize $Z$};
			\node (0n) at (5.25,5.5+\x2) {\small $c_{l}$}; 
			\node (1n) at (7.8,5.5+\x2) {\small $s_{l} $}; 
			
			\node () at (4.5,3.5+\x2) {\footnotesize $II$};
			\node () at (6,3.5+\x2) {\footnotesize $IZ$};
			\node () at (7.5,3.5+\x2) {\footnotesize $ZI$};
			\node () at (9,3.5+\x2) {\footnotesize $ZZ$};
			\node (00n) at (4.5,4+\x2) {\small $c^2_{l+1}$}; 
			\node (01n) at (6,4+\x2) {\small $c_{l+1}s_{l+1}$}; 
			\node (10n) at (7.5,4+\x2) {\small $c_{l+1}s_{l+1}$};
			\node (11n) at (9,4+\x2) {\small $s^2_{l+1}$};
			
			\draw[->,red] (0n) -- (00n);
			\draw[->,red] (0n) -- (11n);
			\draw[->,red] (1n) -- (10n);
			\draw[->,red] (1n) -- (01n);
			
			\node (Z1) at (11.5,5.5+\x2) {\footnotesize $Z^{1/2^{l-1}}$};
			\node (CZ1) at (11.5,4+\x2) {\footnotesize $\left(Z^{1/2^{l}}\right)^{\otimes 2}$};
			\draw[->] (Z1) to (CZ1);
			\node () at (13.5,4.95+\x2) {\small \color{blue}Double-Angle};
			\node () at (13.5,4.45+\x2) {\small \color{blue}Formulas};
			\node () at (-0.55,2.25+\x1) {$x_l = e^{\imath \pi/2^l}$};
			\node () at (0.7,1.75+\x1) {$c_{l} = \cos\frac{\pi}{2^{l}},~s_{l} = -\imath\sin\frac{\pi}{2^{l}}$};
			\node () at (1,1+\x1) {$Z^{1/2^{l-1}} = \frac{1+x_{l-1}}{2}I+\frac{1-x_{l-1}}{2}Z$};
			\node () at (1.2,0.4+\x1) {$ = x_{l}(c_l I+s_l Z)$}; 
			\node () at (0.85,-0.2+\x1) {$\equiv c_l I+s_l Z$};
			\hspace{-40pt}
			\node () at (3.5,2.5+\y2) {\footnotesize $I$};
			\node () at (7.75,2.5+\y2) {\footnotesize $Z$};
			\node (0n) at (3.5,2+\y2) {\small $x_{l}c_{l}$}; 
			\node (1n) at (7.75,2+\y2) {\small $x_{l}s_{l}$};
			
			\node () at (2,0+\y2) {\footnotesize $II$};
			\node () at (4.25,0+\y2) {\footnotesize $IZ$};
			\node () at (6.75,0+\y2) {\footnotesize $ZI$};
			\node () at (9,0+\y2) {\footnotesize $ZZ$};
			\node (00n) at (2,0.5+\y2) {\small $x_{l+1}c_{l}c_{l+1}$}; 
			\node (01n) at (4.25,0.5+\y2) {\small $-x_{l+1}c_{l}s_{l+1}$}; 
			\node (10n) at (6.75,0.5+\y2) {\small $x_{l+1}s_{l}c_{l+1}$};
			\node (11n) at (9,0.5+\y2) {\small $-x_{l+1}s_{l}s_{l+1}$};
			
			\draw[->,red] (0n) -- (00n);
			\draw[->,red] (0n) -- (01n);
			\draw[->,red] (1n) -- (10n);
			\draw[->,red] (1n) -- (11n);
			\hspace{10pt}
			\node (Z1) at (12.5,2.5+\y2) {\footnotesize $Z^{1/2^{l-1}}$};
			\node (CZ1) at (12.5,1+\y2) {\footnotesize $Z^{1/2^{l-1}} \otimes \left(Z^{1/2^{l}}\right)^\dagger$};
			\node (CnZ1) at (12.5,-0.5+\y2) {\footnotesize $Z^{1/2^{l-1}} \otimes \left(Z^{1/2^{l}}\right)^\dagger \otimes \cdots \otimes \left(Z^{1/2^{l-1+j}}\right)^\dagger$};
			\draw[->] (Z1) to (CZ1);
			\draw[->] (12.5,0.1+\y2) to (12.5,0+\y2);
			\node (2dots) at (12.5,0.5+\y2) {\vdots};
			\node () at (14.7,1.25+\y2) {\small \color{blue} Euler's};
			\node () at (14.7,0.75+\y2) {\small \color{blue} Formula};
			
		\end{tikzpicture}
		\caption{Admissible Splits of $Z$-rotations: (a) One step for uniform rotations from $Z^{1/2^{l-1}}$ to $Z^{1/2^l}\otimes Z^{1/2^l}$; (b) Multi-step for non-uniform rotations $Z\to Z \otimes P^\dagger \to Z\otimes P^\dagger \otimes T^\dagger \to \cdots$.}
		\label{fig:Z-rot_split_schemes}
	\end{figure*}
	\begin{align*}
		A_{\bm{\mu},\bm{\gamma}} 
		&= \sum_{\bm{z} \in \langle (\mathcal{C}'_1)^\perp, \bm{\gamma_0} \rangle +\bm{\mu} +\bm{\gamma}} \epsilon_{(\bm{0},\bm{z})} f(\bm{z}) \\
		&= \sum_{\bm{z} \in (\mathcal{C}'_1)^\perp+\bm{\mu} +\bm{\gamma}} \epsilon_{(\bm{0},\bm{z})} f(\bm{z})\\
		&~~~+ \sum_{\bm{z} \in (\mathcal{C}'_1)^\perp +\bm{\mu} +\bm{\gamma}+ \bm{\gamma_0}} \epsilon_{(\bm{0},\bm{z})} f(\bm{z})\\
		&= A'_{\bm{\mu},\bm{\gamma}'=\bm{\gamma}} +A'_{\bm{\mu},\bm{\gamma}'=\bm{\gamma}\oplus\bm{\gamma_0}} .\label{eqn:group_split_GCs} \numberthis
	\end{align*}
	Adding a $Z$-stabilizer simply reverses this process as shown in Figure \ref{fig:rem_Z_stab}(b). 
	\begin{definition}[Admissible Splits]
		A split is \underline{admissible} if the physical diagonal gate $U_Z$ preserves the CSS$(X,\mathcal{C}_2;Z,(\mathcal{C}'_1)^\perp,\bm{y})$ code obtained by removing the non-trivial $Z$-stabilizer $\bm{\gamma_0}$.
	\end{definition}
	Since $U_Z$ preserves the original CSS code, we have
	\begin{align}
		\sum_{\bm{\gamma}\in \mathcal{C}_2^\perp /\mathcal{C}_1^\perp}|A_{\bm{0},\bm{\gamma}}|^2=1.
	\end{align}
	The condition 
	\begin{align}
		\sum_{\bm{\gamma}'\in \langle \mathcal{C}_2^\perp /\mathcal{C}_1^\perp,\bm{\gamma_0} \rangle} |A'_{\bm{0},\bm{\gamma}'}|^2
		&=\sum_{\bm{\gamma}\in  \mathcal{C}_2^\perp /\mathcal{C}_1^\perp} |A'_{\bm{0},\bm{\gamma}}|^2 + |A'_{\bm{0},\bm{\gamma}\oplus\bm{\gamma_0}}|^2 \nonumber \\
		&=1.
	\end{align}
	is both necessary and sufficient for admissibility. Note that the induced logical operator \eqref{eqn:dig_log_op} corresponding to the trivial syndrome remains a diagonal unitary after splitting. 
	
	It is natural to ask how many $Z$-stabilizers are needed to determine a stabilizer code fixed by a given family of diagonal physical operators $U_Z$. Liang, Hu et al. \cite{coherent_noise} derived necessary and sufficient conditions for all transversal $Z$-rotations to preserve the codespace of a stabilizer code. The conditions require the weight $2$ $Z$-stabilizers to cover all the qubits that are in the support of the $X$-component of some stabilizer. Rengaswamy et al. \cite{rengaswamy2020optimality} derived less restrictive necessary and sufficient conditions for a single transversal $T$ gate. 
	
	The difference $A'_{\bm{0},\bm{\gamma}}-A'_{\bm{0},\bm{\gamma}\oplus\bm{\gamma_0}}$ depends on the new $X$-logical $\bm{w_0}$. For $\bm{\gamma}\in\mathcal{C}_2^\perp/\mathcal{C}_1^\perp$, let
	\begin{align}\label{eqn:s_g(w)}
		s_{\bm{\gamma}}(\bm{w_0})\coloneqq \frac{1}{|\mathcal{C}_1|}\sum_{\bm{u}\in \mathcal{C}_1+\bm{w_0}} (-1)^{\bm{\gamma}\bm{u}^T} d_{\bm{u}\oplus\bm{y}}.
	\end{align}
	It then follows from \eqref{eqn:GC_C1_y} that
	\begin{align}\label{eqn:split_w_s_1}
		A'_{\bm{0},\bm{\gamma}} 
		&= \frac{1}{2|\mathcal{C}_1|}\sum_{\bm{u}\in \langle\mathcal{C}_1,\bm{w_0}\rangle } (-1)^{\bm{\gamma}\bm{u}^T} d_{\bm{u}\oplus\bm{y}} \nonumber\\
		&= \frac{1}{2}\left( A_{\bm{0},\bm{\gamma}}+ s_{\bm{\gamma}}(\bm{w_0})\right),
	\end{align}
	and follows from \eqref{eqn:group_split_GCs} that 
	\begin{align}\label{eqn:split_w_s_2}
		A'_{\bm{0},\bm{\gamma}\oplus\bm{\gamma_0}} = \frac{1}{2}\left(A_{\bm{0},\bm{\gamma}} - s_{\bm{\gamma}}(\bm{w_0}))\right).
	\end{align}
	The quantity $s_{\bm{\gamma}}(\bm{w_0})$ determines whether or not a split is admissible.

	We design extensible splittings by expanding diagonal operators in the Pauli basis, and we illustrate our approach by constructing $Z^{1/2^l} \otimes Z^{1/2^l}$ from $Z^{1/2^{l-1}}$. 
	We write 
	\begin{align}
		Z^{1/2^{l-1}} \equiv c_l I+s_l Z,
	\end{align}
	where $c_l \coloneqq \cos \pi/2^l$ and $s_l \coloneqq -\imath\sin \pi/2^l$. Figure \ref{fig:Z-rot_split_schemes}(a) shows how we construct 
	\begin{align}
		\left(Z^{1/2^l}\right)^{\otimes 2} \equiv c^2_{l+1} I\otimes I &+ c_{l+1}s_{l+1}(I \otimes Z+Z \otimes I) \nonumber \\
		& + s^2_{l+1} Z\otimes Z,
	\end{align}
	\begin{figure*}[!ht]
		\centering
		\begin{tikzpicture}
			\tikzmath{\y2 =0; } 
			\node () at (2,11-\y2) {\footnotesize $I$};
			\node () at (6.5,11-\y2) {\footnotesize $Z$};
			\node (0) at (2,10.5-\y2) {\small $c_1(0) =\frac{1+x_{l-1}}{2}$};
			\node (1) at (6.5,10.5-\y2) {\small $c_1(1) = \frac{1-x_{l-1}}{2}$};
			
			\node () at (0,8.75-\y2) {\footnotesize $II$};
			\node () at (3,8.75-\y2) {\footnotesize $IZ$};
			\node () at (6,8.75-\y2) {\footnotesize $ZI$};
			\node () at (8.75,8.75-\y2) {\footnotesize $ZZ$};
			\node (00) at (0,9.25-\y2) {\small $c_2(00)=\frac{c_1(0)+1}{2}$};
			\node (01) at (3,9.25-\y2) {\small $c_2(01)=\frac{c_1(1)}{2}$};
			\node (10) at (6,9.25-\y2) {\small $c_2(10)=\frac{c_1(0)-1}{2}$};
			\node (11) at (8.75,9.25-\y2) {\small $c_2(11)=\frac{c_1(1)}{2}$};
			
			\node () at (-1,7-\y2) {\footnotesize $III$};
			\node () at (0.5,7-\y2) {\footnotesize $IIZ$};
			\node () at (2,7-\y2) {\footnotesize $IZI$};
			\node () at (3.5,7-\y2) {\footnotesize $IZZ$};
			\node () at (5,7-\y2) {\footnotesize $ZII$};
			\node () at (6.5,7-\y2) {\footnotesize $ZIZ$};
			\node () at (8,7-\y2) {\footnotesize $ZZI$};
			\node () at (9.5,7-\y2) {\footnotesize $ZZZ$};
			\node (000) at (-1,7.5-\y2) {\small $\frac{c_2(00)+1}{2}$};
			\node (001) at (0.5,7.5-\y2) {\small $\frac{c_2(01)}{2}$};
			\node (010) at (2,7.5-\y2) {\small $\frac{c_2(10)}{2}$};
			\node (011) at (3.5,7.5-\y2) {\small $\frac{c_2(11)}{2}$};
			\node (100) at (5,7.5-\y2) {\small $\frac{c_2(01)}{2}$};
			\node (101) at (6.5,7.5-\y2) {\small $\frac{c_2(11)}{2}$};
			\node (110) at (8,7.5-\y2) {\small $\frac{c_2(00)-1}{2}$};
			\node (111) at (9.5,7.5-\y2) {\small $\frac{c_2(10)}{2}$};
			
			\draw[->,red] (0) -- (00); 
			\draw[->,red] (0) -- (10); 
			\draw[->,red] (1) -- (01); 
			\draw[->,red] (1) -- (11);
			\draw[->,red] (00) -- (000);
			\draw[->,red] (10) -- (001);
			\draw[->,red] (01) -- (010);
			\draw[->,red] (11) -- (011);
			\draw[->,red] (00) -- (110);
			\draw[->,red] (10) -- (111);
			\draw[->,red] (01) -- (100);
			\draw[->,red] (11) -- (101);
			\node (Z) at (12,11-\y2) {\footnotesize $Z^{1/2^{l-1}}$};
			\node (CZ) at (12,9.75-\y2) {\footnotesize C$Z^{1/2^{l-1}}$};
			\node (CCZ) at (12,8.5-\y2) {\footnotesize CC$Z^{1/2^{l-1}}$};
			\node (CnZ) at (12,7.25-\y2) {\footnotesize C$^{(j)}Z^{1/2^{l-1}}$};
			\draw[->] (Z) to (CZ);
			\draw[->] (CZ) to (CCZ);
			\draw[->] (12,7.6-\y2) to (12,7.5-\y2);
			\node (1dots) at (12,8-\y2) {\vdots};
			\node () at (13,10.5-\y2) {\small \color{blue}$j$ is odd};
			\node () at (13,9.25-\y2) {\small \color{blue}$j$ is even};
			\node () at (13.5,8.2-\y2) {\small \color{blue}Hadamard};
			\node () at (13.5,7.7-\y2) {\small \color{blue}Construction};
		\end{tikzpicture}
		\caption{Admissible Splits from C$^{(j-1)}Z^{1/2^{l-1}}$ to C$^{(j)}Z^{1/2^{l-1}}$ for any fixed $l\ge 1$.}
		\label{fig:control_split}
	\end{figure*}
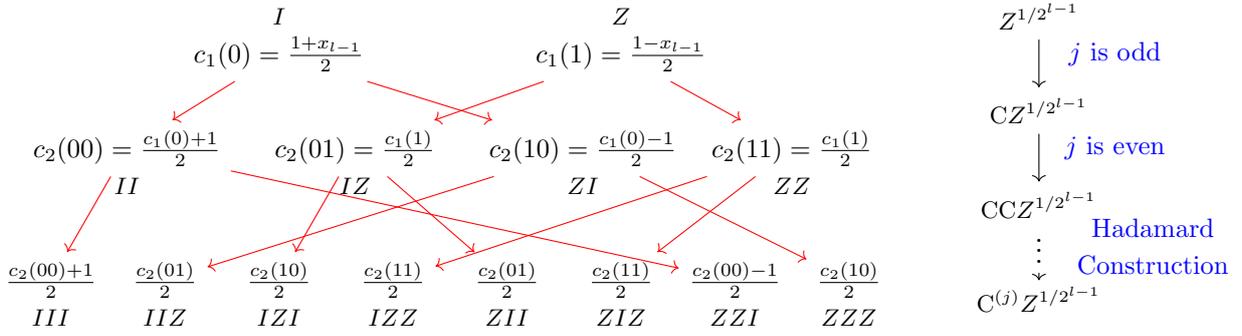
	by making use of the double angle formulas
	\begin{align}
		c_l = c^2_{l+1} +s^2_{l+1}   \text{ and } s_{l} = 2c_{l+1}s_{l+1}.
	\end{align}
	Recall that generator coefficients coincide with Pauli coefficients of the induced logical operator as described in \eqref{eqn:dig_log_op}. The splitting rule determines the values $s_{\bm{\gamma}}(\bm{w_0})$ needed to satisfy in \eqref{eqn:split_w_s_1} and \eqref{eqn:split_w_s_2}. Here we require
	\begin{align}\label{eqn:constraint_on_s}
		s_{\bm{\gamma}}(\bm{w_0}) = 
		\left\{ \begin{array}{lc}
			1, ~~ \text{ if } {\bm{\gamma}} = \bm{0}, \\
			0, ~~ \text{ if } {\bm{\gamma}} \neq \bm{0},
		\end{array}\right.
	\end{align}
	since we can write double-angle formulas as
	\begin{align}
		c^2_{l+1} = \frac{1}{2}\left(c_l +1\right), ~s^2_{l+1} &= \frac{1}{2}\left(c_l -1\right), 
		\\ \text{ and } s_{l+1}c_{l+1} &= \frac{1}{2}\left(s_l+0\right).
	\end{align}

	Note that this design only connects a single level in the Clifford hierarchy to the next level, that it does not extend indefinitely. In Figure \ref{fig:Z-rot_split_schemes}(b), we generalize the design to make it extend indefinitely. We include the global phase $x_{l}\coloneqq e^{\imath\pi/2^l}$ this time, and  decompose part of $x_l$ using the Euler's formula
	\begin{align}\label{eqn:euler}
		x_{l} =x_{l+1}x_{l+1}=x_{l+1}(c_{l+1}-s_{l+1}).  
	\end{align}
	Note that $Z^{1/2^{l-1}} = x_{l}(c_l I+s_l Z)$ and $\left(Z^{1/2^{l}}\right)^\dagger =\frac{x_{l+1}}{x_l}(c_{l+1} I-s_{l+1} Z)$. Then after splitting, we obtain the gate in one level higher
	\begin{align}
		Z^{1/2^{l-1}} \otimes \left(Z^{1/2^{l}}\right)^\dagger =& x_{l+1}(c_lc_{l+1}I\otimes I-c_l s_{l+1} I\otimes Z \nonumber\\
		&+ s_l c_{l+1}Z\otimes I -s_ls_{l+1}Z\otimes Z).
	\end{align}
	The decomposition in \eqref{eqn:euler} holds for any $l$, and we can use induction to prove that after splitting $j$ times, we obtain the gate
	\begin{align}
		Z^{1/2^{l-1}} \otimes \left(Z^{1/2^{l}}\right)^\dagger \otimes \cdots \otimes \left(Z^{1/2^{l-1+j}}\right)^\dagger.
	\end{align}
	Because of the non-uniform rotations, the values $s_{\bm{\gamma}}(\bm{w_0})$ needed to satisfy vary from step to step. We now introduce a splitting that is indefinitely extensible with simple requirement for $s_{\bm{\gamma}}(\bm{w_0})$.

	The diagonal operator C$^{(j-1)}Z^{1/2^{l-1}}=$ diag$[\bm{d_j}]$ for 
	\begin{align}
		\bm{d_j} = [\bm{1}_{2^{j-1}},\bm{1}_{2^{j-1}-1},x_{l-1}]^T,
	\end{align}
	where $\bm{1}_{m}$ is the all-one vector with length $m$. Let $\bm{e_1},\dots,\bm{e_{2^{j}}}$ be the standard basis of $\F_2^{2^{j}}$.We expand C$^{(j-1)}Z^{1/2^{l-1}}$ in the Pauli basis using the Walsh-Hadamard matrix $H_{2^j}$,
	\begin{align}
		\mathrm{C}^{(j-1)}Z^{1/2^{l-1}} = \sum_{\bm{v}\in\F_2^j} c_j(\bm{v})E(\bm{0,\bm{v}}),
	\end{align}
	where $\bm{c_{j}}\coloneqq[c_{j}(\bm{v})]_{\bm{v}\in\F_2^{j}}$ is given by 
	\begin{align}\label{eqn:recur_coeff_control}
		\bm{c_{j}} 
		= H_{2^{j}}\bm{d_{j}}  &=H_{2^{j}}\left(\bm{1}_{2^j}+\left(x_l-1\right)\bm{e_{2^j}}\right)\nonumber \\
		&= \bm{e_1} + \left(\frac{x_l-1}{2^{j}}\right)[(-1)^{w_H(\bm{v})}]^T_{\bm{v}\in\F_2^{j}}.
	\end{align}
	The recursive construction for the Walsh-Hadamard matrix leads to a recursion for the coefficients $c_j(\bm{v})$,
	\begin{align}
		\bm{c_{j+1}} &= \frac{1}{2}
		\begin{bmatrix}
			H_{2^{j-1}} & H_{2^{j-1}}\\
			H_{2^{j-1}} & -H_{2^{j-1}}
		\end{bmatrix}
		\begin{bmatrix}
			\bm{1}_{2^j}\\
			\bm{d_{j}} 
		\end{bmatrix}
	\end{align}
	so that
	\begin{align}
		\label{eqn:split_control_1}
		c_{j+1}([0,\bm{v}]) = \left(e_1\right)_{\bm{v}} + \left(\frac{x_l-1}{2^{j+1}}\right)(-1)^{w_H(\bm{v})} ,
	\end{align}
	and 
	\begin{align}
		\label{eqn:split_control_2}
		c_{j+1}([1,\bm{v}]) = - \left(\frac{x_l-1}{2^{j+1}}\right)(-1)^{w_H(\bm{v})}.
	\end{align}
	Here $\bm{e_1}=[\left(e_1\right)_{\bm{v}}]_{\bm{v}\in\F_2^{2^j}}$. Note that $w_H(\bm{v})+w_H(\bm{1}_j\oplus\bm{v})=j$. If $j$ is odd, then $(-1)^{w_H(\bm{v})}=-(-1)^{w_H(\bm{1}_j\oplus\bm{v})}$ and
	\begin{align}
		c_j(\bm{v}) =c_{j+1}([0,\bm{v}]) + c_{j+1}([1,\bm{1}_j\oplus\bm{v}]).
	\end{align}
	Let $\bm{t}= [0,\dots,0,1] \in \F_2^j$. If $j$ is even, then $(-1)^{w_H(\bm{v})}=-(-1)^{w_H(\bm{1}_j\oplus\bm{v}\oplus\bm{t})}$ and
	\begin{align}
		c_j(\bm{v}) =c_{j+1}([0,\bm{v}]) + c_{j+1}([1,\bm{1}_j\oplus\bm{v}\oplus\bm{t}]).
	\end{align}
	Figure \ref{fig:control_split} describes the splitting process of the cases $j=1,2$. 
	
	It then follows from \eqref{eqn:recur_coeff_control}, \eqref{eqn:split_control_1} and \eqref{eqn:split_control_2} that the requirement for $s_{\bm{\gamma}}(\bm{w_0})$ is the same as in \eqref{eqn:constraint_on_s}.  Although they share the same splitting rule, the global phase $x_{l}$ they differ becomes a local phase after splitting since $s_{\bm{\gamma}=\bm{0}}=1\neq 0$.
	
	\begin{figure}[h!]
		\centering
		\begin{tikzpicture}
			\tikzmath{\y1 =0; }
			\node () at (0.7,3.5-\y1) {$l$th: $Z^{\frac{1}{2^{l-1}}}$, C$Z^{\frac{1}{2^{l-2}}}$, $\dots$, C$^{(l-1)}Z$};
			\node () at (0,4.5-\y1) {3rd: $T = Z^{\frac{1}{4}}$, C$P$, CC$Z$};
			\node () at (-0.5,5.5-\y1) {2nd: $P=\sqrt{Z}$, C$Z$};
			\node () at (-1.4,6.5-\y1) {1st: $Z$};
			\draw[->] (-1,6.3) to (-1,5.7);
			\draw[->] (-1,5.3) to (-1,4.7);
			\draw[->,dashed] (-1,4.25) to (-1,3.7);
			\draw[->] (-0.9,6.3) to (0.75,5.75);
			\draw[->] (-0.9,5.3) to (0.6,4.7);
			\draw[->] (0.85,5.3) to (1.55,4.7);
			\draw[->,dashed] (0.7,4.25) to (1.6,3.6);
			\draw[->,dashed] (1.7,4.25) to (2.5,3.75);
		\end{tikzpicture}
		\caption{Admissible Splits among the Elementary Operators in the Diagonal Clifford Hierarchy.}
		\label{fig:elem_DCH}
	\end{figure}
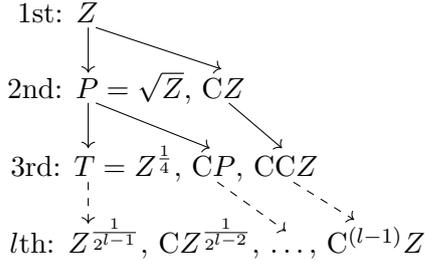
	Note that the admissible splits we describe include all the elementary operators in the diagonal Clifford hierarchy as shown in Figure \ref{fig:elem_DCH}. Figure \ref{fig:Z-rot_split_schemes} corresponds to the vertical line in Figure \ref{fig:elem_DCH}, and Figure \ref{fig:control_split} corresponds to the oblique line in Figure \ref{fig:elem_DCH}. 
	
	We now describe how to choose the new $X$-logical $\bm{w_0}$ to lift the level of the induced logical operator. For $l\ge 1$ we suppose that the physical transversal $Z$-rotation $\left(\exp{(-\imath\frac{\pi}{2^l})Z}\right)^{\otimes n}$ preserves an $\llbr n,k,d \rrbr $ CSS$(X,\mathcal{C}_2;Z,\mathcal{C}_1^\perp,\bm{y}=\bm{0})$ code, inducing a single $Z^{1/2^{l-1}}$ or C$^{(j)}Z^{1/2^{l-1}}$.
	
	\begin{theorem}
		\label{thm:specif_rem_Z}
		Suppose that after concatenation, the removal of $Z$-stabilizers introduces the new $X$-logical $\bm{w_0}=[\bm{1}_n,\bm{0}_n]$.
		
		\noindent Then, the logical operator lifts to $\left(Z^{1/2^{l}}\right)^{\otimes 2}$ or $\mathrm{C}^{(j)}Z^{1/2^{l-1}}$.
	\end{theorem}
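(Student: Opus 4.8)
The plan is to funnel the whole statement through the single quantity $s_{\bm{\gamma}}(\bm{w_0})$ of \eqref{eqn:s_g(w)}: I will show that the prescribed new $X$-logical $\bm{w_0}=[\bm{1}_n,\bm{0}_n]$ forces $s_{\bm{\gamma}}(\bm{w_0})$ to be exactly the profile \eqref{eqn:constraint_on_s}, after which the conclusion follows from the split formulas \eqref{eqn:split_w_s_1}--\eqref{eqn:split_w_s_2}, the double-angle recursion of Figure \ref{fig:Z-rot_split_schemes}(a), and the Hadamard recursion of Figure \ref{fig:control_split}. First I fix the object produced by concatenation. Starting from $U_Z=\bigl(e^{-\imath\pi Z/2^l}\bigr)^{\otimes n}$, whose entries are $d_{\bm{u}}=e^{-\imath\pi n/2^l}e^{\imath\pi w_H(\bm{u})/2^{l-1}}$, I take the post-concatenation gate to be the balanced rotation one level higher, $U'_Z=\bigl(e^{-\imath\pi Z/2^{l+1}}\bigr)^{\otimes 2n}$. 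Its entry $d'_{[\bm{a},\bm{b}]}=e^{-\imath\pi n/2^l}e^{\imath\pi(w_H(\bm{a})+w_H(\bm{b}))/2^l}$ depends only on $w_H(\bm{a})+w_H(\bm{b})$, so $d'_{[\bm{u},\bm{u}]}=d_{\bm{u}}$; hence Theorem \ref{thm:concatenation} applies, the $\llbr 2n,k,d'\rrbr$ CSS code of Figure \ref{fig:concatenation}, with $X$-codes $\mathcal{C}'_2=[1,1]\otimes\mathcal{C}_2\subset\mathcal{C}'_1=[1,1]\otimes\mathcal{C}_1$ and $\bm{y}'=\bm{0}$, is preserved by $U'_Z$ (which now lives at level $l+1$), and it still induces $Z^{1/2^{l-1}}$ (with $k=1$) or $\mathrm{C}^{(j-1)}Z^{1/2^{l-1}}$ (with $k=j$).

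The core step is the evaluation of \eqref{eqn:s_g(w)} for this concatenated code. Since $\bm{w_0}=[\bm{1}_n,\bm{0}_n]\notin\mathcal{C}'_1$ (a vector of that form would have equal halves), $\bm{w_0}$ is not orthogonal to all of $(\mathcal{C}'_1)^\perp$, so there is $\bm{\gamma_0}\in(\mathcal{C}'_1)^\perp$ with $\bm{\gamma_0}\cdot\bm{w_0}^T=1$; removing any such $\bm{\gamma_0}$ turns $\bm{w_0}$ into the stated new $X$-logical and doubles the $Z$-logicals as in Figure \ref{fig:rem_Z_stab}(a). Every $\bm{u}\in\mathcal{C}'_1+\bm{w_0}$ equals $[\bm{c}\oplus\bm{1}_n,\bm{c}]$ for a unique $\bm{c}\in\mathcal{C}_1$, and $w_H(\bm{c}\oplus\bm{1}_n)+w_H(\bm{c})=n$; because the rotation is balanced, this total weight is exactly half of $2n$ and $d'_{[\bm{c}\oplus\bm{1}_n,\bm{c}]}=e^{-\imath\pi n/2^{l+1}}e^{\imath\pi n/2^{l+1}}=1$ for every $\bm{c}$. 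Writing $\bm{\gamma}=[\bm{\alpha},\bm{\beta}]$ and using $\bm{\gamma}\cdot[\bm{c}\oplus\bm{1}_n,\bm{c}]^T\equiv w_H(\bm{\alpha})+(\bm{\alpha}\oplus\bm{\beta})\cdot\bm{c}^T\pmod 2$ together with orthogonality of the characters of $\mathcal{C}_1$, I get
\begin{equation*}
s_{\bm{\gamma}}(\bm{w_0})=
\begin{cases}
(-1)^{w_H(\bm{\alpha})}, & \bm{\alpha}\oplus\bm{\beta}\in\mathcal{C}_1^\perp,\\
0, & \text{otherwise.}
\end{cases}
\end{equation*}
For a representative $\bm{\gamma}=[\bm{\alpha},\bm{\beta}]$ of the $Z$-logicals $(\mathcal{C}'_2)^\perp/(\mathcal{C}'_1)^\perp$ one has $\bm{\alpha}\oplus\bm{\beta}\in\mathcal{C}_2^\perp$, and $\bm{\alpha}\oplus\bm{\beta}\in\mathcal{C}_1^\perp$ exactly when $\bm{\gamma}$ represents the trivial coset; taking $\bm{0}$ for that coset gives $s_{\bm{\gamma}}(\bm{w_0})=1$ for $\bm{\gamma}=\bm{0}$ and $0$ otherwise, which is precisely \eqref{eqn:constraint_on_s}.

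It remains to close the loop. Substituting $s_{\bm{\gamma}}(\bm{w_0})=\delta_{\bm{\gamma},\bm{0}}$ into \eqref{eqn:split_w_s_1}--\eqref{eqn:split_w_s_2} reproduces verbatim the double-angle split $c_l=c_{l+1}^2+s_{l+1}^2$, $s_l=2c_{l+1}s_{l+1}$ of Figure \ref{fig:Z-rot_split_schemes}(a) in the first case and the recursion \eqref{eqn:split_control_1}--\eqref{eqn:split_control_2} of Figure \ref{fig:control_split} in the second. Since by \eqref{eqn:dig_log_op} the pre-removal generator coefficients are the Pauli coefficients of $Z^{1/2^{l-1}}$ (resp.\ $\mathrm{C}^{(j-1)}Z^{1/2^{l-1}}$), the split coefficients are, after choosing the logical generator matrix compatibly with those recursions, the Pauli coefficients of $\bigl(Z^{1/2^l}\bigr)^{\otimes 2}$ (resp.\ $\mathrm{C}^{(j)}Z^{1/2^{l-1}}$), which is the assertion. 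Admissibility is automatic: the parallelogram identity gives $|A'_{\bm{0},\bm{\gamma}}|^2+|A'_{\bm{0},\bm{\gamma}\oplus\bm{\gamma_0}}|^2=\tfrac12\bigl(|A_{\bm{0},\bm{\gamma}}|^2+|s_{\bm{\gamma}}(\bm{w_0})|^2\bigr)$, and summing over $\bm{\gamma}$ yields $\tfrac12(1+1)=1$ since $U'_Z$ preserves the un-split code and $\sum_{\bm{\gamma}}|s_{\bm{\gamma}}(\bm{w_0})|^2=1$.

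The step I expect to be the real obstacle is the evaluation of $s_{\bm{\gamma}}(\bm{w_0})$, and in particular spotting that concatenation must be done with the \emph{balanced} rotation $e^{-\imath\pi Z/2^{l+1}}$ rather than $Z^{1/2^l}$ (which carries the global phase $e^{\imath\pi/2^l}$), so that the otherwise leftover factor $e^{\imath\pi n/2^l}$ cancels and $s_{\bm{0}}(\bm{w_0})$ is $1$ on the nose. The remainder is bookkeeping: tracking which binary vectors represent which cosets before and after removing $\bm{\gamma_0}$, and lining up the two logical generator matrices with the Pauli expansion of the target gate.
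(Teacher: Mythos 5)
Your proof is correct and takes essentially the same route as the paper's: concatenate to the balanced rotation $\left(e^{-\imath\pi Z/2^{l+1}}\right)^{\otimes 2n}$, observe that $\bm{w_0}=[\bm{1}_n,\bm{0}_n]\notin\mathcal{C}_1'$ is a legal new $X$-logical, evaluate $s_{\bm{\gamma}}(\bm{w_0})$ by noting that every element of $\mathcal{C}_1'+\bm{w_0}$ has Hamming weight exactly $n$ (so the diagonal entry is constantly $1$), and then invoke character orthogonality over $\mathcal{C}_1$ to land on \eqref{eqn:constraint_on_s}. The only cosmetic differences are that you parametrize $Z$-logicals as $[\bm{\alpha},\bm{\beta}]$ rather than picking the representative $[\bm{\gamma},\bm{0}]$ as the paper does, and that you spell out admissibility via the parallelogram identity, which the paper leaves implicit once \eqref{eqn:constraint_on_s} is hit.
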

	\begin{proof}
		Concatenation transforms the physical operator
		\begin{align}
			U_Z = \left(\exp{\left(-\imath\frac{\pi}{2^l}Z\right)}\right)^{\otimes n}\equiv \left(Z^{1/2^{l-1}}\right)^{\otimes n}
		\end{align} 
		into
		\begin{align}
			U'_Z = \left(\exp{\left(-\imath\frac{\pi}{2^{l+1}}Z\right)}\right)^{\otimes 2n} 
			\equiv \left(Z^{1/2^{l}}\right)^{\otimes 2n}.
		\end{align} 
		The physical operator $U'_Z$ preserves the $\llbr 2n,k,d'\ge d \rrbr$ CSS$(X,\mathcal{C}'_2;Z,(\mathcal{C}_1')^\perp,\bm{y}'=[\bm{0}_n,\bm{0}_n])$ codespace, as shown in Figure \ref{fig:concatenation} and Theorem \ref{thm:concatenation}. After concatenation, every element in $\mathcal{C}_1'$ takes the form $[\bm{u},\bm{u}]$ for some $\bm{u}\in \mathcal{C}_1$. Since $\bm{w_0}=[\bm{1}_n,\bm{0}_n]\notin \mathcal{C}_1'$, we can introduce $\bm{w_0}$ as a new $X$-logical ($\mathcal{C}''_1 = \langle \mathcal{C}'_1,\bm{w_0}\rangle$). 
		Concatenation does not change the generator coefficients, and it follows from \cite[Lemma 4]{generator_coeff_framework} that
		\begin{align}
			d_{[\bm{u},\bm{u}]}=\left(e^{-\imath\frac{\pi}{2^{l+1}}}\right)^{2n-2w_H([\bm{u},\bm{u}])}
		\end{align}
		for $[\bm{u},\bm{u}]\in\mathcal{C}'_1$. Let $\bm{\gamma}\in \mathcal{C}_2^\perp/\mathcal{C}_1^\perp$. Then $[\bm{\gamma},\bm{0}] \in  (\mathcal{C}'_2)^\perp/(\mathcal{C}'_1)^\perp$, and it follows form \eqref{eqn:s_g(w)} that
		\begin{align}
			&s_{[\bm{\gamma},\bm{0}]}([\bm{1},\bm{0}]) \nonumber \\
			&=\frac{1}{|\mathcal{C}'_1|}\sum_{[\bm{1}\oplus\bm{u},\bm{u}]\in\mathcal{C}'_1+[\bm{1},\bm{0}]}(-1)^{[\bm{\gamma},\bm{0}] [\bm{1}\oplus\bm{u},\bm{u}]^T} d_{[\bm{1}\oplus\bm{u},\bm{u}]\oplus[\bm{0},\bm{0]}} \nonumber \\
			&=\frac{1}{|\mathcal{C}_1|}\sum_{\bm{u}\in\mathcal{C}_1}(-1)^{\bm{\gamma}(\bm{1}\oplus\bm{u})^T}\left(e^{-\imath\frac{\pi}{2^{l+1}}}\right)^{2n-2w_H([\bm{1}\oplus\bm{u},\bm{u}])}. 
		\end{align}
		Since $w_H([\bm{1}\oplus\bm{u},\bm{u}])=n$ for all $\bm{u}\in\mathcal{C}_1$, we have 
		\begin{align}
			s_{[\bm{\gamma},\bm{0}]}([\bm{1},\bm{0}])
			&=(-1)^{\bm{\gamma}\bm{1}^T}\frac{1}{|\mathcal{C}_1|}\sum_{\bm{u}\in\mathcal{C}_1}(-1)^{\bm{\gamma}\bm{u}^T}\nonumber\\
			&= \left\{\begin{array}{lc}
				1, & \text{ if } \bm{\gamma}=\bm{0}, \\
				0, & \text{ if } \bm{\gamma}\neq\bm{0},
			\end{array} \right.
		\end{align}
		and the theorem now follows from \eqref{eqn:constraint_on_s}.
	\end{proof}

	\addtocounter{example}{-2}
	\begin{example}[Continued: from $\llbr 14,1,3\rrbr$ to $\llbr 14,2,2\rrbr$; Logical $P^\dagger \to (T^\dagger)^{\otimes 2}$ ]
		\label{examp:log_from_steane}
		\normalfont
		The $\llbr 14,1,3\rrbr$ code is obtained by concatenating the $\llbr 7,1,3\rrbr$ Steane code. 
		We introduce the new $X$-logical $\bm{w_0}=[\bm{1},\bm{0}]\in\F_2^{2n}$ by removing the $Z$-stabilizer $\bm{\gamma_0}=[\bm{1},\bm{1}]\in(\mathcal{C}'_1)^\perp$ to produce the $\llbr 14,2,2\rrbr$ code. The generator coefficients $A''_{\bm{0},\bm{\gamma}^{''}}\left(\frac{\pi}{4}\right)$ of the $\llbr 14,2,2\rrbr$ code for $\bm{\gamma}''\in\langle \bm{\gamma_1} = [\bm{1},\bm{0}],\bm{\gamma_0} \rangle$ under the physical $T^{\otimes 14}$ gate are
		\begin{align}
			A''_{\bm{0},\bm{\gamma}^{''}=\bm{0}}\left(\frac{\pi}{4}\right) &=\frac{1}{2}\left(\cos\frac{\pi}{4}+1\right) = \left(\cos\frac{\pi}{8}\right)^2,\nonumber \\
			A''_{\bm{0},\bm{\gamma}^{''}=\bm{\gamma_1}}\left(\frac{\pi}{4}\right) &= \frac{1}{2}\imath\sin\frac{\pi}{4} 
			= \imath\sin\frac{\pi}{8}\cos\frac{\pi}{8}.
		\end{align}
		Splitting gives
		\begin{align}
			A''_{\bm{0},\bm{\gamma}^{''}=\bm{\gamma_0}} 
			&= A'_{\bm{0},\bm{0}} - A^{''}_{\bm{0},\bm{\gamma}^{''}=\bm{0}} = \left(\imath\sin\frac{\pi}{8}\right)^2,\nonumber\\
			A''_{\bm{0},\bm{\gamma}^{''}=\bm{\gamma_1}\oplus\bm{\gamma_0}} 
			&= A'_{\bm{0},\bm{\gamma_1}} - A''_{\bm{0},\bm{\gamma}^{''}=\bm{\gamma_1}} = \imath\sin\frac{\pi}{8}\cos\frac{\pi}{8}.
		\end{align}
		It follows from \eqref{eqn:dig_log_op} that the logical operator induced by $T^{\otimes 14}$ on the $\llbr 14,2,2\rrbr$ codespace is $\left(T^\dagger\right)^{\otimes 2}$. Note that the $\llbr 14,2,2\rrbr$ code is a member of the triorthogonal code family introduced by Bravyi and Haah \cite{bravyi2012magic}. The operations described above can transform the $\llbr 15,1,3 \rrbr$ triorthogonal code \cite{knill1996accuracy,bravyi2005universal,landahl2013complex,anderson2014fault} to the $\llbr 30,2,2 \rrbr$ code for which the physical transversal $\sqrt{T}$ induces a logical $\sqrt{T}^\dagger$. The same operations work for the whole punctured Reed-Muller family $\llbr 2^{l+1}-1,1,3\rrbr $ \cite{landahl2013complex} that realize the single logical $Z^{1/2^{l-1}}\in\mathcal{C}_d^{(l)}$ and results in the $\llbr 2^{l+2}-2,2,2\rrbr$ triorthogonal code family realizing the logical transversal $Z^{1/2^{l}}\in\mathcal{C}_d^{(l+1)}$. 
	\end{example}
	
	\begin{example}[Continued: the $\llbr 2^l,l,2\rrbr$ code family realizes C$^{(l-1)}Z$]
		\label{examp:log_from_422}
		\normalfont
		Starting from the $\llbr 4,2,2 \rrbr$ code, we first concatenate to obtain the $\llbr 8,2,2\rrbr$ code, and then remove the $Z$-stabilizer associated with adding the new $X$-logical $\bm{w_0}=[\bm{1},\bm{0}]$ to produce the $\llbr 8,3,2\rrbr$ code. The $\llbr 4,2,2 \rrbr$ code realizes C$^{(1)}Z=$C$Z$ up to some logical Pauli $Z$ by either physical transversal Phase gate $P^{\otimes 4}$ or transversal Control-$Z$ gate C$Z^{\otimes 2}$. The $\llbr 8,3,2 \rrbr$ code realizes C$^{(2)}Z=$CC$Z$ up to some logical Pauli $Z$ by either physical transversal T gate $T^{\otimes 8}$ or transversal Control-Phase gate C$P^{\otimes 4}$. Repeated concatenation and removal of $Z$-stabilizers yields the $\llbr 2^l,l,2\rrbr$ code family that supports the logical C$^{(l-1)}Z$ gate up to some logical Pauli $Z$. When the physical gate is a transversal $Z$-rotation, the generator coefficients of the $\llbr 2^l,l,2\rrbr$ code family are listed below. 
		\begin{table}[h!]
			\centering
			\caption{The Splitting of Generator Coefficients for the induced logical C$^{(l-1)}$Z (up to some logical Pauli $Z$). The $\llbr 2^l,l,2\rrbr$ CSS codes are preserved by physical transversal $Z$-rotations $ \left(\exp{\left(-\imath\frac{\pi}{2^{l-1}}Z\right)}\right)^{\otimes 2^l}$.}
			\renewcommand{\arraystretch}{1.3} 
			\begin{tabular}{|c|c|c|}
				\hline
				&
				$U_Z^L$ up to $Z^L$
				&
				Generator Coefficients $A_{\bm{0},\bm{\gamma}}$
				\\
				\hline
				2 & 
				C$^{(1)}Z$ &
				$\frac{1}{2}$ $-\frac{1}{2}$ $-\frac{1}{2}$ $-\frac{1}{2}$ \\
				\hline
				3 & 
				C$^{(2)}Z$ &
				$\frac{3}{4}$ $-\frac{1}{4}$ $-\frac{1}{4}$ $\cdots$ $-\frac{1}{4}$ $-\frac{1}{4}$ \\
				\hline
				$l$ & 
				C$^{(l-1)}$Z &
				$\frac{2^{l-1}-1}{2^{l-1}}$ $-\frac{1}{2^{l-1}}$ $-\frac{1}{2^{l-1}}$ $\cdots$ $-\frac{1}{2^{l-1}}$ \\
				\hline
			\end{tabular}
			\label{tab:CC..CZ_splitting}
		\end{table}
	\end{example}
	
	Since removing $Z$-stabilizers may decrease code distance, we introduce a third elementary operation in the next Section with the aim of increasing the distance. 
	
	\begin{figure*}
		\centering
		\begin{tikzpicture}
			\draw[dashed,black] (6.5,0) -- (6.5,6.3);
			\hspace{-290pt}
			\node (b) at (10.75,6) {(a) Adding an $X$-stabilizer};
			\node (Z) at (12,0) {$\{ \bm{0} \}$};
			\node (C2) at (12,1.75) {$\mathcal{C}_2$};
			\node (C1) at (12,3.5) {$\mathcal{C}_1$};
			\node (F2m) at (12,5.25) {$\mathbb{F}_2^{n}$};
			\draw[<-,black,dotted,thick] (11.8,0.875) to [out=180,in=180] (11.8,2.625);g
			\node (x0) at (11,1.75) {$\color{blue}\bm{x_0}$};
			\path[draw] (Z) -- (C2)  -- (C1) -- (F2m);
			
			\node (Zp) at (14.5,0) {$\{ \bm{0} \}$};
			\node (C1p) at (14.5,1.75) {$\mathcal{C}_1^{\perp}$};
			\node (C2p) at (14.5, 3.5) {$\mathcal{C}_2^{\perp}$};
			\node (F2m) at (14.5,5.25) {$\mathbb{F}_2^{n}$};
			\draw[<-,black,dotted,thick] (14.7,4.375) to [out=-10,in=10] (14.7,2.625);
			\node (mu0) at (15.5,3.5) {$\color{blue}\bm{\mu_0}$};
			\path[draw] (Zp) -- (C1p) -- (C2p)  -- (F2m); 
			
			\hspace{510pt}
			\draw[] (0,4) -- (3,4) -- (3,2) -- (0,2) -- (0,4);
			\draw[] (0,3.7) -- (3,3.7);
			\draw[blue,thick] (1.5,4) rectangle (3,2); 
			\draw[fill=red] (1.5,4) rectangle (1.8,3.7); 
			\draw[fill=green] (0,4) rectangle (1.5,3.7); 
			\node (gamma) at (1.5,4.3) {$\bm{\gamma}\in \mathcal{C}_2^\perp / \mathcal{C}_1^\perp$};
			\node (mu) at (-0.2,3) {$\bm{\mu}$};
			\node (A_0,gamma) at (2.3,3.5) {\color{red}\small$A_{\bm{0},\bm{\gamma'}\oplus\bm{\mu_0}}$};
			\draw[] (5.5,5) -- (7,5) -- (7,1) -- (5.5,1) -- (5.5,5);
			\draw[] (5.5,3) -- (7,3); 
			\draw[] (5.5,4.7) -- (7,4.7);
			\draw[] (5.5,2.7) -- (7,2.7);
			\draw[fill=red] (5.5,2.7) rectangle (5.8,3); 
			\draw[fill=green] (5.5,5) rectangle (7,4.7); 
			\draw[blue,thick] (5.5,1) rectangle (7,3); 
			\node (A'_0,gamma) at (7,2.5) {\color{red}\small$A'_{\bm{\mu_0},\bm{\gamma}'} = A_{\bm{0},\bm{\gamma'}\oplus\bm{\mu_0}}$}; 
			\node (gamma') at (6.2,5.3) {$\bm{\gamma}'\in\langle \mathcal{C}_2,{\color{blue}\bm{x_0}}\rangle^\perp / \mathcal{C}_1^\perp$};
			\node (mu) at (5.3,4) {$\bm{\mu}$};
			\node (mu') at (4.8,2) {$\bm{\mu}+{\color{blue}\bm{\mu_0}}$};
			
			\node (eqn) at (3.2,6) {(b) Transforming the table of generator coefficients}; 
			\draw[->] (3.5,3.15) to (4.5,3.15);
			\node (rm) at (4,2.6) {remove};
			\draw[<-] (3.5,2.85) to (4.5,2.85);
			\node (ad) at (4,3.4) {add};
		\end{tikzpicture}
		\caption{(a) Adding the old $X$-logical $\bm{x_0}$ as a new $X$-stabilizer transforms the old $Z$-logical $\bm{\mu_0}$ to a new $X$-syndrome. (b) Introducing a new $X$-stabilizer $\bm{x_0}$ doubles the number of $X$-syndromes and halves the number of $Z$-logicals. The blue rectangle shifts as the generator coefficients evolve. }
		\label{fig:add_X_stab}
	\end{figure*}
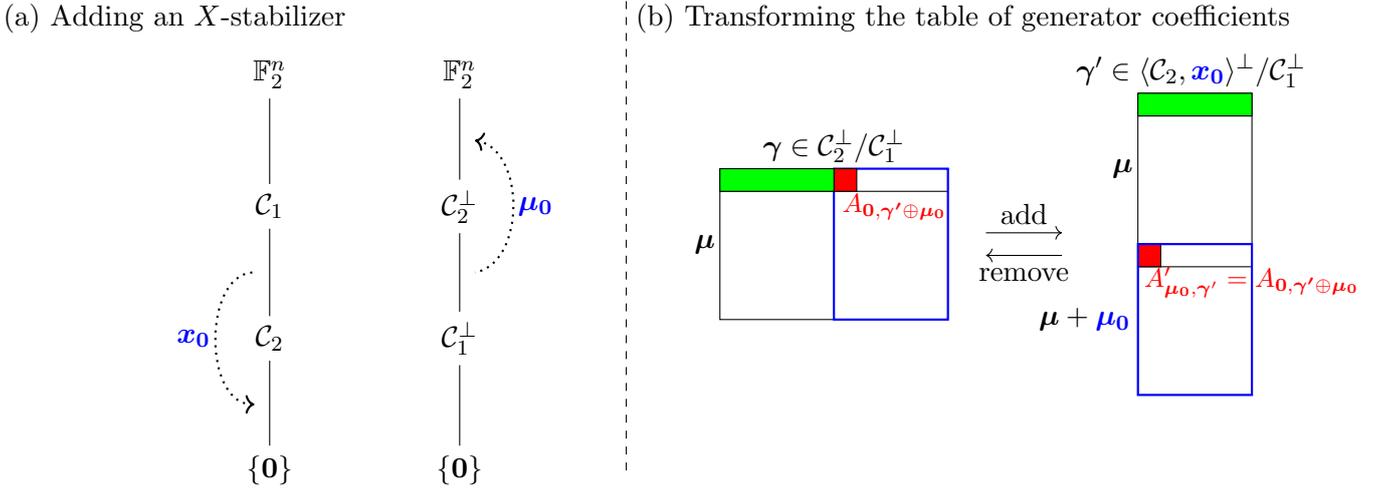
	
	\section{Increase Distance}
	\label{sec:add_X_stab}
	Our focus on diagonal gates $U_Z$ that preserve CSS$(X,\mathcal{C}_2;Z,\mathcal{C}_1^\perp,\bm{y})$ codes implies that the effective distance is the $Z$-distance, $d_Z = \min_{\bm{z}\in \mathcal{C}_2^\perp \setminus \mathcal{C}_1^\perp}w_H(\bm{z})$. Concatenation, described in Figure \ref{fig:concatenation}, does not change $d_Z$. Removal of $Z$-stabilizers increases the number of $Z$-logicals in $\mathcal{C}_2^\perp \setminus \mathcal{C}_1^\perp$, and this may decrease $d_Z$. After removing $Z$-stabilizers we may need to increase effective distance by introducing new $X$-stabilizers. We now examine how generator coefficients evolve when we add or remove $X$-stabilizers.    
	
	Adding a new $X$-stabilizer $\bm{x_0}\in \mathcal{C}_1\setminus\mathcal{C}_2$ transforms a CSS($X$, $\mathcal{C}_2$; $Z$, $\mathcal{C}_1^\perp$, $\bm{y}$) code to a CSS($X$, $\langle \mathcal{C}_2, \bm{x_0} \rangle$; $Z$, $\mathcal{C}_1^\perp,\bm{y}$) code. A $Z$-logical $\bm{\mu_0}$ in the original code becomes an $X$-syndrome in the new code. Note that $\bm{\mu_0}\in \mathcal{C}_2^\perp \setminus\mathcal{C}_1^\perp$ and $\bm{\mu_0}\notin \langle \mathcal{C}_2, \bm{x_0} \rangle^\perp \setminus \mathcal{C}_1^\perp$. The number of $Z$-logicals is halved, while the number of $X$-syndromes is doubled, so the number of generator coefficients remains constant. Let $U_Z$ be a fixed diagonal physical gate. The generator coefficients $A_{\bm{\mu},\bm{\gamma}}$ for the old code determine the generator coefficients $A'_{\bm{\mu}',\bm{\gamma}'}$ for the new code as follows:
	\begin{align*}
		A'_{\bm{\mu}',\bm{\gamma}'} 
		& = \sum_{\bm{z}\in \mathcal{C}_1^\perp + \bm{\mu}'+ \bm{\gamma}'} \epsilon_{(\bm{0},\bm{z})} f(\bm{z}) \\
		& = 
		\left\{\begin{array}{lc}
			A_{\bm{\mu}',\bm{\gamma}'}, 
			& \text{ if } \bm{\mu}' \in \F^n_2 / \mathcal{C}_2^\perp, \\
			A_{\bm{\mu}'\oplus\bm{\mu_0},\bm{\gamma}'\oplus \bm{\mu_0}}, 
			& \text{ if } \bm{\mu}' \oplus \bm{\mu_0} \in \F^n_2 / \mathcal{C}_2^\perp.
		\end{array} \right.\numberthis\label{eqn:reshape_GCs}
	\end{align*}
	Note that the new $Z$-logical $\bm{\gamma}'\in \langle \mathcal{C}_2, \bm{x_0} \rangle^\perp/\mathcal{C}_1^\perp$. If $\bm{\mu}'$ coincides with an old syndrome, then $A'_{\bm{\mu}',\bm{\gamma}'} = A_{\bm{\mu}',\bm{\gamma}'}$. Otherwise $\bm{\mu}'\oplus\bm{\mu_0} \in \F^n_2 / \mathcal{C}_2^\perp$ and $\bm{\gamma}'\oplus\bm{\mu_0} \in \mathcal{C}_2^\perp /\mathcal{C}_1^\perp$. Figure \ref{fig:add_X_stab} captures the process of adding and removing $X$-stabilizers. Note that \eqref{eqn:reshape_GCs} is reversed when an $X$-stabilizer is removed. 
	
	If we remove an $X$-stabilizer from a CSS code that is preserved by a diagonal gate $U_Z$, then the new code is still preserved by $U_Z$. If instead, we add an $X$-stabilizer,   then the new code may fail to be preserved by $U_Z$. We say that addition of an $X$-stabilizer is \emph{admissible} if the new code is preserved by $U_Z$. We now characterize admissible additions in terms of the new $X$-syndrome $\bm{\mu_0}$. 
	
	Let $\mathcal{C}_2^\perp/\mathcal{C}_1^\perp = \langle D,\bm{\mu_0} \rangle$. The old is preserved by $U_Z$ if and only if 
	\begin{align}
		\sum_{\bm{\gamma}\in \langle D,\bm{\mu_0}\rangle}|A_{\bm{0},\bm{\gamma}}|^2=1,
	\end{align}
	and the new code is preserved by $U_Z$ if and only if
	\begin{align}
		\sum_{\bm{\gamma}\in D}|A_{\bm{0},\bm{\gamma}}|^2=1.
	\end{align}
	Addition of $\bm{x_0}$ is admissible if and only if
	\begin{align}
		A_{\bm{0},\bm{\gamma}} = 0 \text{ for all } \bm{\gamma}\in D+\bm{\mu_0}.
	\end{align}
	We require that half the generator coefficients $A_{\bm{0},\bm{\gamma}}$ vanish. The non-vanishing coefficients appear in the green rectangle shown in Figure \ref{fig:add_X_stab}(b). Then, it follows from \eqref{eqn:dig_log_op} that the logical operator stays at the same level after an admissible addition. It also follows from \eqref{eqn:split_w_s_1} and \eqref{eqn:split_w_s_2} that an addition is admissible if and only if 
	\begin{align}
		s_{\bm{\gamma}}(\bm{w_0})=\pm A_{\bm{0},\bm{\gamma}} \text{ for all } \bm{\gamma}\in \mathcal{C}_2^\perp/\mathcal{C}_1^\perp.
	\end{align}
	We may need to concatenate several times and remove several independent $Z$-stabilizers to create enough zeros among the generator coefficients. 
	
	We now combine concatenation, removal of $Z$-stabilizers, and addition of $X$-stabilizers to construct a CSS code family with growing distance that is preserved by diagonal operators with increasing logical level in the Clifford hierarchy.
	
	\begin{example}[Quantum Reed-Muller (QRM) Code Family]
		\label{examp:QRM_fam}
		\normalfont
		Introduced in \cite[Theorem 14]{generator_coeff_framework} and \cite[Theorem 19]{rengaswamy2020optimality}, this is a family of $\llbr 2^m, \binom{m}{r},2^{\min\{r,m-r\}}\rrbr$ CSS codes preserved by physical transversal $Z$-rotations $\left(Z^{1/2^{\left({m}/{r}-1\right)}}\right)^{\otimes 2^m}$
		when $r \mid m$. We now describe how these codes are constructed by concatenation followed by removal of $Z$-stabilizers and addition of $X$-stabilizers. 
		
		Let $r\ge 1$ be fixed. Note that $m/r$ increases by $1$ when $m$ increases by $r$, and that the new code is preserved by a physical gate that is one level higher in the Clifford hierarchy. We start from a $\llbr 2^m,\binom{m}{r},2^{\min\{r,m-r\}}\rrbr$ CSS code determined by $\mathcal{C}_1=\mathrm{RM}(r,m)$ and $\mathcal{C}_2=\mathrm{RM}(r-1,m)$. The recursive construction of classical Reed-Muller codes \cite{macwilliams1977theory} is given by 
		\begin{align}\label{eqn:recur_RM}
			\mathrm{RM}(r,m+1) = \{(\bm{u},\bm{u}\oplus\bm{v})\mid &\bm{u}\in \mathrm{RM}(r,m), \nonumber \\
			&\bm{v}\in \mathrm{RM}(r-1,m)\}.
		\end{align}
		Let $\bm{1}_{2^r}$ denotes the vector of length $2^r$ with every entry equals to 1. We concatenate our CSS code $r$ times to construct the $\llbr 2^{m+r}, \binom{m}{r},2^{\min\{r,m-r\}} \rrbr$ CSS code determined by $\mathcal{C}'_1=\bm{1}_{2^r}\otimes \mathrm{RM}(r,m)$ and $\mathcal{C}'_2=\bm{1}_{2^r}\otimes\mathrm{RM}(r-1,m)$. 
		Note that $\mathcal{C}'_1 \subseteq \mathrm{RM}(r,m+r)$ and $\mathcal{C}'_2 \subseteq \mathrm{RM}(r-1,m+r)$. We now remove the $Z$-stabilizers and add the $X$-stabilizers to make  $\mathcal{C}'_1=\mathrm{RM}(r,m+r)$, $\mathcal{C}'_2=\mathrm{RM}(r-1,m+r)$. We obtain the $\llbr 2^{m+r},\binom{m+r}{r},2^{\min\{r,m\}}\rrbr $ CSS code which is the next member of the QRM code family. The level of the new induced logical operator equals that of the new physical transversal $Z$-rotations \cite[Theorme 19]{rengaswamy2020optimality}, which is one level higher than that of the old induced logical operator. For fixed $r$, the operations described above just maintain the distance. 
		
		To achieve the growing distance, we can increase $r$ by $1$, and increase $m$ by $h\coloneqq r+\frac{m}{r}+1$ so that $\frac{m}{r}+1 = \frac{m+h}{r+1}$. When $r\mid m$, it follows from \eqref{eqn:recur_RM} that we can obtain the $\llbr 2^{m+h},\binom{m+h}{r+1},2^{\min{\{r+1,m+h-r-1\}}} \rrbr$ CSS code from a $\llbr 2^m,\binom{m}{r},2^{\min\{r,m-r\}}\rrbr$ CSS code by first concatenating $h$ times, then removing $\left( \binom{m+h}{r+1}+\binom{m+h}{r}-\binom{m}{r}\right)$ $Z$-stabilizers, and adding $\binom{m+h}{r}$ $X$-stabilizers. The logical operator induced by the new code is one level higher than that of the old code, and the distance doubles for the new code. Figure $\ref{fig:three_elem_ops}$ illustrates the case when $m=2$ and $r=1$.
	\end{example}
	
	\section{Conclusion} 
	\label{sec:conc}
	Given a CSS code that realizes a diagonal gate at the $l^{\mathrm{th}}$ level, we have introduced three basic operations that can be combined to construct a new CSS code that realizes a diagonal gate at the $(l+1)^{\mathrm{th}}$ level in the Clifford hierarchy. The three basic operations are concatenation (to increase the physical level), removal of $Z$-stabilizers (to increase the logical level and increase code rate), and addition of $X$-stabilizers (to increase the distance). We have derived necessary and sufficient conditions for admissibility, that is for the new code to be preserved by the target physical operator. We have described these conditions using the mathematical framework of generator coefficients. Concatenation is always admissible, while the other two basic operations may not be admissible. We have demonstrated the power of combining the three basic operations to synthesize a target diagonal operator by climbing the Clifford hierarchy to construct the QRM code family.
	
	In future work, we expect to explore how best to balance removal of $Z$-stabilizers and addition of $X$-stabilizers. We will also investigate the existence of code families corresponding to Figure \ref{fig:Z-rot_split_schemes}(b). 
	\section*{Acknowledgement}
	The work of the authors was supported in part by NSF under grant CCF1908730.
	\bibliographystyle{IEEEtran}

\end{document}